\DeclareMathOperator{\RU}{\mathsf{RotUp}}
\DeclareMathOperator{\PRU}{\mathsf{PRotUp}}
\DeclareMathOperator{\RL}{\mathsf{RotLeft}}
\DeclareMathOperator{\CS}{\mathsf{SumCols}}
\DeclareMathOperator{\RS}{\mathsf{SumRows}}
\DeclareMathOperator{\cplx}{\mathsf{cplx}}
\DeclareMathOperator{\T}{\intercal}
\DeclareMathOperator{\Softmax}{\mathsf{Softmax}}
\DeclareMathOperator{\CE}{\mathsf{CE}}
\DeclareMathOperator{\conj}{\mathsf{Conj}}
\DeclareMathOperator{\lrot}{\mathsf{Lrot}}
\DeclareMathOperator{\rrot}{\mathsf{Rrot}}
\newcounter{protocol}
\newenvironment{protocol}[1]
{
    \par\addvspace{\topsep}
    \noindent
    \tabularx{\linewidth}{@{} X @{}}
    \toprule
    \refstepcounter{protocol}\textbf{Protocol \theprotocol} #1 \\
    \hline
    }
    {   \\
    \bottomrule
    \endtabularx
    \par\addvspace{\topsep}
}
\newcommand{\pt}{\textsf{pt}}
\newcommand{\ct}{\textsf{ct}}
\newcommand{\train}{\textsf{train}}
\newcommand{\val}{\textsf{val}}
\newcommand\enc[1]{\langle {#1} \rangle}
\newcommand\mrow[2]{\multirow{#1}{*}{#2}}
\newcommand{\sm}{\mathsf{Softmax}}
\theoremstyle{plain}
\newtheorem{theorem}{Theorem}[section]
\newtheorem*{theorem*}{Theorem}
\newtheorem{proposition}[theorem]{Proposition}
\newtheorem{lemma}[theorem]{Lemma}
\theoremstyle{definition}
\theoremstyle{remark}
\icmltitlerunning{HETAL: Efficient Privacy-preserving Transfer Learning with Homomorphic Encryption}
\begin{document}

\twocolumn[
\icmltitle{HETAL: Efficient Privacy-preserving Transfer Learning with Homomorphic Encryption}



\icmlsetsymbol{prev}{$\dag$}

\begin{icmlauthorlist}
\icmlauthor{Seewoo Lee
}{bkl,prev}
\icmlauthor{Garam Lee}{ctl}
\icmlauthor{Jung Woo Kim}{ctl}
\icmlauthor{Junbum Shin}{ctl}
\icmlauthor{Mun-Kyu Lee}{inh}
\end{icmlauthorlist}

\icmlaffiliation{bkl}{University of California, Berkeley, US
}
\icmlaffiliation{ctl}{CryptoLab Inc., Seoul, South Korea}
\icmlaffiliation{inh}{Inha University, Incheon, South Korea  
}

\icmlcorrespondingauthor{Mun-Kyu Lee}{mklee@inha.ac.kr}

\icmlkeywords{Privacy-preserving Machine Learning, Homomorphic Encryption, Transfer Learning, Softmax, Matrix Multiplication}

\vskip 0.3in
]



\printAffiliationsAndNotice{\icmlPrevWork} 

\begin{abstract}
\renewcommand{\thefootnote}{\fnsymbol{footnote}}
Transfer learning is a \emph{de facto} standard method for efficiently training machine learning models for data-scarce problems by adding and fine-tuning new classification layers to a model pre-trained on large datasets.
Although numerous previous studies proposed to use homomorphic encryption 
to resolve the data privacy issue in transfer learning in the machine learning as a service setting,
most of them only focused on encrypted inference. In this study, we present \textbf{HETAL}, an efficient \textbf{H}omomorphic \textbf{E}ncryption based \textbf{T}r\textbf{a}nsfer \textbf{L}earning algorithm, that protects the client's privacy in training tasks 
by encrypting the client data using the CKKS homomorphic encryption scheme.
\textbf{HETAL} is the first practical scheme that strictly provides encrypted training,
adopting validation-based early stopping and achieving the accuracy of nonencrypted training.
We propose an efficient encrypted matrix multiplication algorithm, which is 1.8 to 323 times faster than prior methods, and a highly precise softmax approximation algorithm with increased coverage.
The experimental results for five well-known benchmark datasets show total training times of 567--3442 seconds, which is less than an hour.
\footnote{Our codes for the experiments are available at https://github.com/CryptoLabInc/HETAL.}
\end{abstract}

\section{Introduction}

Transfer learning (TL)~\cite{FederatedLearning} is a \emph{de facto} standard method used to enhance the model performance by adding and fine-tuning new client-specific classification layers to a generic model pre-trained on large datasets.
In the machine learning as a service (MLaaS) setting,
a server may grant a client (data owner) access to a pre-trained model to extract features,
and the client can then send the extracted features to the server for fine-tuning.
Here, however,
sensitive client data can be leaked to the server because the extracted features may contain significant information about the original raw data.
For example, it is well-known that 
a facial image can be reconstructed from its feature vector~\cite{FaceReconstruction}.
There exist other recent studies demonstrating that feature reversion attacks pose severe threats against neural network-based face image recognition. For example, even SOTA face recognition systems, such as ArcFace and ElasticFace, are vulnerable to reversion attacks~\cite{Shahreza2022}. A recent study achieved a successful attack rate of 99.33\% against ArcFace features~\cite{Dong2023}.
In natural language processing (NLP), BERT embeddings can be inverted to recover up to 50--70\% of the original input words because of their semantic richness~\cite{song2020information}.
Therefore, 
a method to protect the transmitted features
is critical to protect the client's private data. 
Data privacy has become a worldwide concern~\cite{walch2022cryptotl},
with many countries having enacted privacy laws,
such as the EU General Data Protection Regulation
(GDPR) \cite{gdpr}.

To address this data privacy issue, extensive research has been conducted on privacy-preserving 
machine learning, 
some of which can
be applied to TL. 
Most studies used cryptographic primitives such as secure multi-party computation (SMPC)~\cite{Kan2004MPC, Wan2007MPC, Nikolaeko2013MPC,  sameer2018MPC, 9076003}, differential privacy (DP)~\cite{Wang2019DP, Zhu2022DP}, and homomorphic encryption (HE)~\cite{walch2022cryptotl, van2019sealion, jin2020secure}.
Some previous studies have combined SMPC and HE~\cite{Nikolaeko2013MPC, Mohassel2017MPC, Lehmkuhl2021MPC, Chandran2022Com, hao2022iron}.
However, SMPC-based solutions require significant communication between the client and server and
DP-based solutions can reduce the accuracy. Although HE-based approaches can address these issues, they require extensive computations. Therefore, it is crucial to optimize HE operations to achieve practical performance.

SecureML \cite{Mohassel2017MPC} was
the first efficient privacy-preserving protocol for neural network training.
It effectively combined SMPC and linear HE but required two noncolluding servers for secure computation.
Elsloo et al.~\cite{van2019sealion} proposed SEALion, a HE-based solution for TL with a pre-trained VAE.
CryptoTL \cite{walch2022cryptotl} also used HE to secure TL. 
However, 
in SEALion and CryptoTL,
training for fine-tuning was not performed on the ciphertexts.
To be precise,
the server owns a private pre-trained model and
the client sends encrypted queries to the server. The server then performs inference
on the encrypted input and produces encrypted output features,
which can be decrypted by the client.
Fine-tuning is performed on these decrypted features by the client.
Since fine-tuning must be conducted on the client side, the client is expected to possess certain level of knowledge in machine learning,
but this is not always the case.
PrivGD~\cite{jin2020secure} is
the first HE-based MLaaS solution that supports encrypted fine-tuning of TL.
In this scenario, the client extracts features from its data using a shared feature extractor, encrypts the features using HE, and sends the encrypted features to the server to fine-tune the classifier.
However, PrivGD is designed for 
a small-scale sensor dataset with
an input feature dimension of 32.
The matrix multiplication algorithm in \cite{jin2020secure} could not be implemented using a typical GPU 
for a dataset with many features
owing to their high memory requirements.
For example, it requires more than 100GB for MNIST.

In this paper,  we aim to protect the client's privacy in training tasks for TL
when the client does not have the expertise in actual fine-tuning.
Therefore, we consider the same scenario as that of PrivGD~\cite{jin2020secure}.
We assume that the server is honest-but-curious (HBC). In other words, it does not deviate from the defined protocol but will attempt to learn all possible information from legitimately received messages~\cite{goldreich2004foundations}.
Although the server
fine-tunes the classifier,
it does not obtain the final model in plaintext
because the model is encrypted
with the client's key.
The server's training expertise is protected against a client as all training tasks are performed on the server side.

We propose
\textbf{HETAL}, an efficient \textbf{H}omomorphic \textbf{E}ncryption based \textbf{T}r\textbf{a}nsfer \textbf{L}earning algorithm for privacy-preserving TL.
\textbf{HETAL} is the first practical scheme that strictly provides HE-based encrypted training.
We applied the optimization techniques
used in non-encrypted training
and achieved almost the same accuracy as nonencrypted training for five well-known benchmark datasets.
We adopted validation-based early stopping, the most commonly used regularization method in deep learning~\cite{DLbook} to determine when to terminate the training.
To the best of our knowledge, none of the previous HE-based training methods have applied this because of performance issues.
For example,
PrivGD~\cite{jin2020secure}
fixed the number of training epochs
before starting the fine-tuning process,
considering the balance between the estimated multiplication depth in HE and accuracy,
where the balance was experimentally found in advance.
Our proposal for \textbf{HETAL} was achieved based on the significant acceleration of encrypted matrix multiplication, which is a dominant operation in the training task, and a highly precise approximation algorithm for the softmax function.
Our key contributions are as follows:
\begin{itemize}
\item We propose \textbf{HETAL}. 
\textbf{HETAL} protects the client's privacy in training tasks for TL 
by encrypting the client data using HE before sending it to the server. \textbf{HETAL} utilizes
the CKKS scheme~\cite{cheon2017homomorphic} because CKKS supports encrypted arithmetic over real numbers.
\item 
We implemented and evaluated \textbf{HETAL} using five well-known benchmark datasets (MNIST \cite{deng2012mnist},  CIFAR-10 \cite{krizhevsky2009learning}, Face Mask Detection \cite{larxel2020face},
DermaMNIST \cite{yang2023medmnist}, and SNIPS \cite{coucke2018snips}),
in addition to two pre-trained models (ViT \cite{dosovitskiy2021an} and MPNet \cite{song2020mpnet}). 
Our experimental results showed training times of 4.29 to 15.72 seconds per iteration and total training times of 567 to 3442 seconds (less than an hour), with almost the same classification accuracy as nonencrypted training. The accuracy loss by encrypted training was 0.5\% at most.

\item 
For \textbf{HETAL}, we propose a new softmax approximation algorithm, which covers a significantly wider range than the previous works with high precision.
We substantially expand the domain of softmax approximation to $[-128, 128]$, enabling us to train models for several hundred steps, which was impossible with
reasonable approximation error using previous approximation methods. For example, PrivGD \cite{jin2020secure} could not cover $[-8, 8]$. 
It was also impossible with direct application of the domain extension technique proposed in \cite{cheon2022efficient}.
We also provide a rigorous proof for the error bound of the proposed approximation algorithm.

\item
We also propose optimized matrix multiplication algorithms, $\mathsf{DiagABT}$ and $\mathsf{DiagATB}$, that compute matrix multiplications of the form $AB^{\T}$ and $A^{\T}B$ for encrypted matrices $A$ and $B$.
The outstanding speed of HETAL was aided by our optimization of matrix multiplication because it costs 18\% to 55\% of the overall training.
Our proposed algorithms are 
more efficient in both memory and computation than previous algorithms~\cite{jin2020secure,crockett2020low} and show a performance improvement of 1.8 to 323 times.

\end{itemize}

\section{Preliminaries}

\subsection{Transfer learning} \label{section:tl}

In this study, we focus on a multiclass classification task.
We adopted the most common approach for TL, using a pre-trained model as a feature extractor and fine-tuning a classification layer.
For fine-tuning, the layer is trained using Nesterov's accelerated gradient (NAG, \cite{nesterov1983method}) method to minimize the cross-entropy loss $\mathcal{L}_{\mathsf{CE}}$, which guarantees faster convergence than the vanilla SGD and is also HE-friendly (see \cite{kim2018logistic, kim2018secure, crockett2020low}).
More specifically, let $N$, $f$, and $c$ denote the mini-batch size, number of features, and number of classes, respectively. The number of features equals the output dimension of the pre-trained feature extractor.
Let $\mathbf{X} = (x_{ij}) \in \mathbb{R}^{N \times (f+1)}$ and $\mathbf{Y} = (y_{ik}) \in \mathbb{R}^{N \times c}$ be matrices representing the input features and one-hot encoded labels of the data in the mini-batch, respectively.
Let $\mathbf{W} = (w_{kj}) \in \mathbb{R}^{c \times (f +1)}$ be the parameter matrix.
We assume that the last column of $\mathbf{W}$ is the bias column and that the corresponding column of $\mathbf{X}$ is filled with ones.
The probability that the $i$-th data belongs to the $k$-th class is modeled using the softmax function as follows:
\begin{equation*}
\label{eqn:inference}
    \mathrm{prob}(\mathbf{X};\mathbf{W})_{ik} = \Softmax(\mathbf{X}_{i}\mathbf{W}^{\T})_{k},
\end{equation*}
where $\mathbf{X}_{i}\in \mathbb{R}^{f+1}$ denotes the $i$th row of $\mathbf{X}$.
We denote $\mathbf{P} = (\mathrm{prob}(\mathbf{X};\mathbf{W})_{ik}) \in \mathbb{R}^{N \times c}$.
Subsequently, the gradient $\nabla_{\mathbf{W}} \mathcal{L}_{\CE}$ of Cross-Entropy Loss $\mathcal{L}_{\mathsf{CE}}$ with respect to $\mathbf{W}$ has the simple form of
\begin{equation*}
\label{eqn:gradient}    
\nabla_{\mathbf{W}} \mathcal{L}_{\CE} = \frac{1}{N} (\mathbf{P} -\mathbf{Y})^{\T}\mathbf{X}.
\end{equation*}
We use it to update the layer's parameter $\mathbf{W}$ with NAG: For each step $t$,
\begin{align*}
    \mathbf{W}_{t+1} &= \mathbf{V}_{t} - \alpha \nabla_{\mathbf{V}_{t}} \mathcal{L}_{\CE}, \\
    \mathbf{V}_{t+1} &= (1 - \gamma_t) \mathbf{W}_{t+1} + \gamma_t \mathbf{W}_{t},
\end{align*}
where $\mathbf{V}_t$ are auxiliary parameters with randomly initialized $\mathbf{V}_1 = \mathbf{W}_1$, 
$\alpha$ denotes the learning rate,
and $\gamma_t = (1 - \lambda_{t}) / \lambda_{t+1}$ where $\lambda_0 = 0$ and $\lambda_{t+1} = (1 + \sqrt{1 + 4\lambda_{t}^2})/2$.

\begin{figure*}
  \begin{center}
  \includegraphics[width=0.5\textwidth]{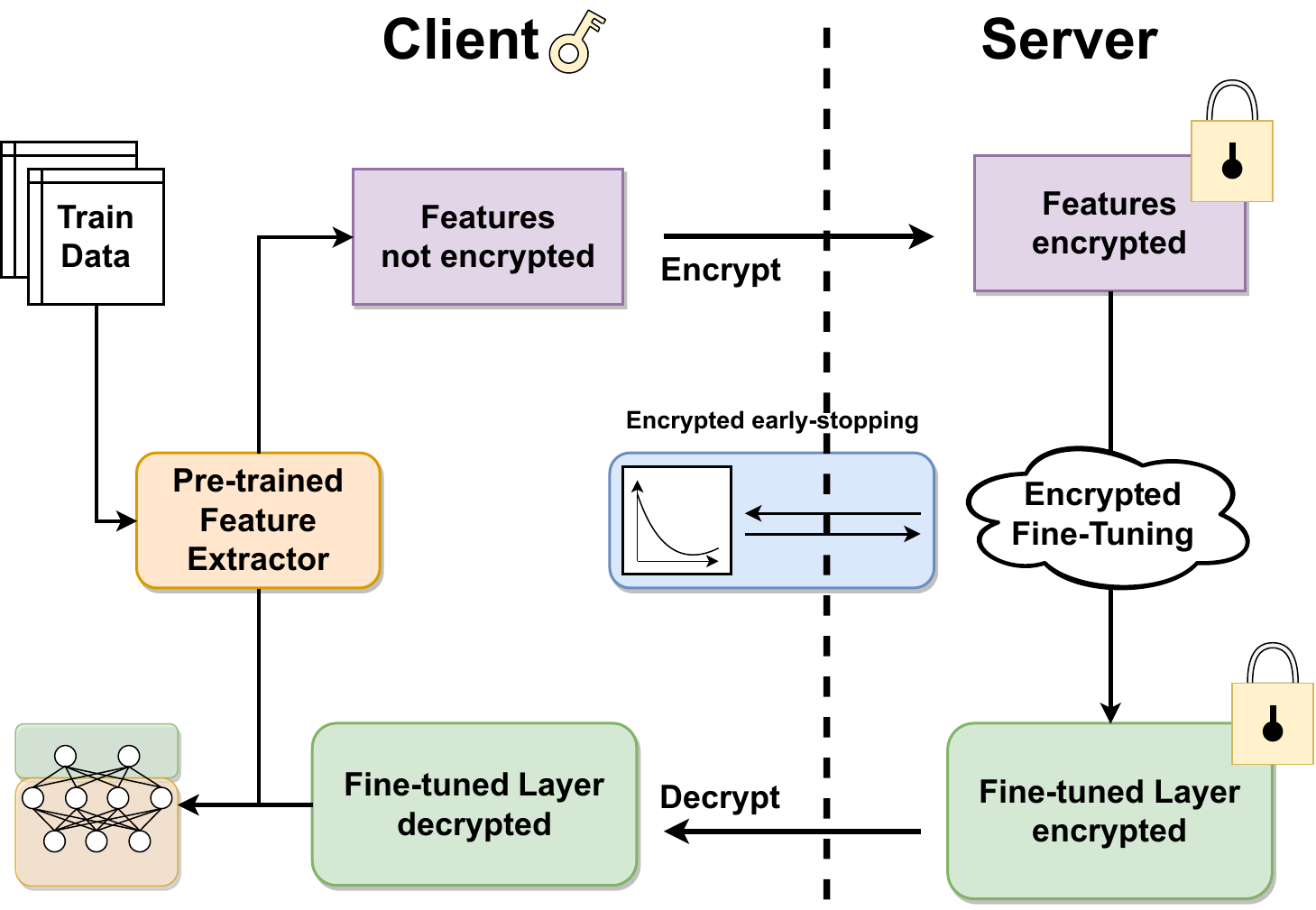}
  \end{center}
  \caption{Our privacy-preserving transfer learning protocol (\textbf{HETAL})}
  \label{fig:model}
\end{figure*}

\subsection{Homomorphic Encryption: CKKS Scheme}
Homomorphic Encryption (HE) is a cryptographic primitive that can support computations on encrypted data without decryption.
Particularly, the CKKS scheme \cite{cheon2017homomorphic} is an HE scheme, supporting \emph{approximate} arithmetic operations over encrypted real and complex numbers.
It encrypts multiple complex numbers into a single ciphertext and supports single instruction multiple data (SIMD) operations that perform the same operation simultaneously.
The following operations are available for ciphertexts: 
\begin{itemize}
    \item \textbf{Addition}: Element-wise addition of two ciphertexts.
    
    \item \textbf{Multiplication}: Element-wise multiplication of two ciphertexts.
    We denote $\mathsf{Mult}$ for ciphertext-ciphertext multiplication and $\mathsf{CMult}$ for plaintext-ciphertext multiplication (constant multiplication).
    We also denote $x \odot y$ for the multiplication of $x$ and $y$, irrespective of whether $x$ and $y$ are encrypted.
    The CKKS scheme is a \emph{leveled} HE scheme; therefore, we can compute multivariate polynomials of \emph{bounded} multiplicative depth.
    It is worth noting that multiplying an arbitrary complex constant also consumes a depth.
    
    \item \textbf{Rotation}: For a given plaintext $m = (z_{0}, \dots, z_{s-1}) \in \mathbb{C}^{s}$ and its ciphertext $\mathsf{ct}$, we can compute $\mathsf{ct_{rot}} = \mathsf{Lrot}(\mathsf{ct}, r)$ for $0\leq r < s$, whose decryption is approximately $(z_{r}, \dots, z_{s-1}, z_{0}, \dots, z_{r-1})$.

    \item \textbf{Complex conjugation}: Element-wise complex conjugation of a ciphertext.
    
    \item \textbf{Bootstrapping}: 
    Bootstrapping \cite{cheon2018bootstrapping} is a unique operation that allows us to compute multivariate polynomials of \emph{arbitrary} degrees.
    Since bootstrapping is the most expensive computation among all the basic operations in homomorphic encryption, it is crucial to reduce the multiplicative depths of circuits to reduce the number of bootstrapping operations.
\end{itemize}

\subsection{Threat Model}

We assume an AutoML-like service, in which a client can outsource model training to a server. The proposed system comprises two parties: one is a client who owns the data and the other is a server that provides ML training services. The protocol aims to allow clients to outsource model training to a server while preserving the privacy of their data. We assume that the server and client can share a pre-trained generic model as a feature extractor, because there are many publicly available pre-trained models, including Vision Transformer (ViT) used in our experiments. During the training task, the client extracts features from its private data using the feature extractor and sends the HE-encrypted features to the server. The server performs fine-tuning for TL on the ciphertext domain and produces an encrypted model.

We assume that the server is honest-but-curious (HBC), where an HBC adversary is a legitimate participant in a communication protocol who will not deviate from the defined protocol but will attempt to learn all possible information from legitimately received messages~\cite{goldreich2004foundations}. Note that HE provides a good defense to protect the data against an HBC server because the server performs computation over encrypted data without knowing the decryption key.
\section{Protocol}


\subsection{HETAL Protocol}

Protocol 1 describes \textbf{HETAL}, which is our privacy-preserving transfer learning protocol.
We use superscript $\ct$ (resp. $\pt$) when the matrix is encrypted (resp. not encrypted).
Note that the client's data are encrypted using the client's key, so the server cannot decrypt it, while the server can perform computations using public operation keys.
The client extracts features from the input data using a pre-trained model (step 1) and sends the encrypted features to the server (step 2).
Then the server fine-tunes a classification layer with encrypted data using NAG, which results in an encrypted fine-tuned layer (step 3).
The server sends the encrypted final model to the client and lets the client decrypt and use it for inference (step 4).
The server can early-stop the training by computing logits of a validation set and communicating with the client (step 3 (b)-(d)).
The client computes the validation loss with the (decrypted) logit and labels and then sends a signal to stop the training if needed, which can prevent overfitting.
The client's data are not revealed to the server because they remain encrypted during the training. 
Figure~\ref{fig:model} shows the overall procedure of \textbf{HETAL}.

We remark that as an alternative to step 4 of our protocol, it is also possible to store the encrypted layer in the server for encrypted inference: the client may send the encrypted features to be classified to the server, receive the encrypted result
and decrypt it.

\begin{protocol}{Protocol of \textbf{HETAL}}
\vspace{-1.5ex}
\begin{enumerate}
    \item Using a pre-trained model, the client extracts features from training and validation data.
    \item The client encrypts the extracted features and labels of the training set as $\mathcal{B}_{\train}^{\ct} = \{(\mathbf{X}_{\train}^{\ct}, \mathbf{Y}_{\train}^{\ct})\}$ and sends them to the server. The client also sends the encrypted features $\mathbf{X}_{\val}^{\ct}$ of the validation set to the server.
    \item For each epoch, repeat the following:
    \begin{enumerate}
        \item For $(\mathbf{X}^{\ct}, \mathbf{Y}^{\ct}) \leftarrow \mathcal{B}^{\ct}_{\train}$, server updates parameters with NAG:
        $$
            \mathbf{W}^{\ct}_{t+1} = \mathbf{V}^{\ct}_{t} - \frac{\alpha}{N} (\mathbf{P}^{\ct} - \mathbf{Y}^{\ct})^{\T} \mathbf{X}^{\ct}
        $$
        $$
            \mathbf{V}_{t+1}^{\ct} = (1 - \gamma_t)\mathbf{W}_{t+1}^{\ct} + \gamma_t \mathbf{W}_{t}^{\ct}
        $$
        where $\alpha$ is the learning rate, $N$ is the batch size, 
        $\mathbf{P}^{\ct} = \mathsf{ASoftmax}(\mathbf{X}^{\ct}(\mathbf{V}^{\ct}_{t})^{\T})$ with (row-wise) softmax approximation $\mathsf{ASoftmax}$, 
        and $\{\gamma_t\}_{t\geq 0}$ is defined in Section \ref{section:tl}.
        \item With the last weight $\mathbf{W}^{\ct}$ in the epoch, the server computes logits $\ell_{\val}^{\ct} = \mathbf{X}^{\ct}_{\val}(\mathbf{W}^{\ct})^{\T}$ and sends them to the client.
        \item The client decrypts the logits and computes the validation loss $\mathcal{L}_{\CE}$ with $\ell_{\val}^{\pt}$ and $\mathbf{Y}_{\val}^{\pt}$.
        \item If a loss is not decreased for a fixed number of epochs (patience), then the client sends a signal to the server to stop training.
        Otherwise, the current best weight $\mathbf{W}^{\ct}_{*}$ is replaced with the new weight $\mathbf{W}^{\ct}$.
    \end{enumerate}
    \item The server sends $\mathbf{W}_{*}^{\ct}$ to the client, and the client decrypts the parameter and obtains the fine-tuned layer $\mathbf{W}^{\pt}_{*}$.
\end{enumerate}
\end{protocol}


\subsection{Security Analysis}

We will demonstrate that HETAL protects the client’s data against an HBC server. If the number of training epochs is fixed, the only information a server can see is encrypted features. Therefore, HETAL protects client’s data against HBC server owing to the security of HE. Even when a validation-based early stopping (Step 2 (b) – (d)) is used, only one additional information, i.e., a signal to let the server stop training, is sent to the server, which does not seem to be useful for the server to recover the client’s private data. Additionally, HETAL protects the fine-tuned model from the server because only a client can see it.

In practice, when HETAL is used in AutoML-like MLaaS, the client software may be provided by the server because MLaaS should be available to a non-expert in ML. The client can simply run the software, which applies a pre-trained model to its own data, and then encrypt the extracted features using HE before sending them. However, in this scenario, the client’s data can be secured only if the software is trustworthy. If the source code of the client software can be open to the public, we can apply an approach such as code verification by third parties: for example, see Section 3.1 of \cite{knauth2018integrating}, stating, ``interested parties can inspect the source code to convince themselves.''
Therefore, we can ensure the trustworthiness of the HETAL client SW by making them publicly verifiable: it is not harmful for a server to open the source code to public because it consists of a known pre-trained model and encryption function, which are not the server's secrets.

Finally, we mention that HETAL protects the server's expertise, such as hyper-parameters and optimized software for training because all training tasks are performed on the server side.

\section{Algorithm}

In this section,
we propose a new softmax approximation algorithm with a much wider range than those in previous studies.
In addition, we propose two novel encrypted matrix multiplication algorithms, denoted $\mathsf{DiagABT}$ and $\mathsf{DiagATB}$, which compute matrix multiplications of the forms $AB^{\T}$ and $A^{\T}B$.

\subsection{Softmax Approximation}

There are several works on the approximation of softmax function with polynomials \cite{lee2022privacy, hong2022secure, jin2020secure}.
However, all of these methods have low precision (See the Table \ref{tab:softmaxerr} in the Appendix) and permit only a small domain of approximation, which is not desirable for training with many epochs.
We remark that both \cite{lee2022privacy} and \cite{hong2022secure} have been used for inference rather than training.

Regarding the domain of approximation,
we experimentally found that the input values of the softmax function increase as training proceeds and easily deviate from the domain that the previous methods can handle.
For example, the maximum input value of softmax varies from 0.38 to over 100 on MNIST dataset  (Figure \ref{fig:softmaxinput}), which cannot be controlled by the previous approximation methods.
Consequently, it is essential to expand the domain of approximation to perform as many training epochs as we want.

To approximate the softmax function on a large interval efficiently, we apply the domain extension technique by \cite{cheon2022efficient}.
The authors introduced domain extension functions (DEFs) and domain extension polynomials (DEPs) and provided an algorithm to approximate a sigmoid-like functions on exponentially large intervals.
More precisely, they provided an algorithm to approximate a DEF that clips an input into a fixed interval as a composition of low-degree polynomials, and applied it to obtain a polynomial that approximates the sigmoid function $\sigma(x) = 1 / (1 + e^{-x})$ on a large interval of scale $[-7683, 7683]$.
In general, we can approximate a polynomial on a range $[-R, R]$ with $O(\log R)$ complexity.

However, directly applying the domain extension algorithm in \cite{cheon2022efficient} does not constantly work.
For example, assume that we have an approximation of a 3-variable softmax on a 3-dimensional box $[-8, 8]^{3}$, and an input is given by $(8, 10, 13)$.
Here, the actual value of softmax is $\mathsf{Softmax}(8, 10, 13) = (0.006, 0.047, 0.946)$.
However, the naive application of the DEF clips the input as $(8, 8, 8)$ and produces $(0.333, 0.333, 0.333)$ as an output.
A naive application of the method can lead to identical outputs and result in considerable errors.
To address the issue, we first \emph{normalize} input by subtracting maximum value.
More precisely, we first compute approximate maximum $m = \mathsf{Amax}(\mathbf{x})$ and set $\mathbf{x}' = \mathsf{Norm}(\mathbf{x}) = (x_1', \dots, x_j') \in \mathbb{R}^c$ as $x'_j = x_j - m$ for $1\leq j \leq c$.
We use the homomorphic comparison algorithm proposed in \cite{cheon2020efficient} to compute $\mathsf{Amax}$
homomorphically as $\mathsf{Amax}(a, b) = a \cdot \mathsf{Acomp}(a, b) + b \cdot (1 - \mathsf{Acomp}(a, b))$, with $O(\log c)$ comparisons and rotations,
where $\mathsf{Acomp}$ approximates a function $\mathsf{comp}(a, b)$ that returns 1 if $a \geq b$ and 0 otherwise.
Then we can easily see that $\sm(\mathbf{x}') = \sm(\mathbf{x})$.
Now let $D_n: [-L^n R, L^n R]^c \to [-R,  R]^c$ be a DEP obtained by Algorithm 1 of \cite{cheon2022efficient} with $n$-iterations, where $L$ is a domain extension ratio.
The following theorem tells us that, if $p$ is an approximation of the softmax on a small domain, then $\mathsf{ASoftmax}:=p \circ D_n \circ \mathsf{Norm}$ gives an approximation of softmax on a large domain.
For example, when $R=8$, $L=2$, and $n =5$, $\mathsf{ASoftmax}$ may cover $[-128, 128]$, which can handle all steps in Figure~\ref{fig:softmaxinput}.

\begin{theorem*}
Let $p: \mathbb{R}^{c} \to \mathbb{R}^{c}$ be an approximation of the softmax on $[-R, R]^{c}$ satisfying 
$$
\|\mathsf{Softmax}(\mathbf{x}) - p(\mathbf{x})\|_{\infty} < \epsilon.
$$
Then for $\mathbf{x} \in [-\frac{1}{2}L^n R, \frac{1}{2}L^n R]^c$, we have
$$
\|\mathsf{Softmax}(\mathbf{x}) - p(D_n(\mathsf{Norm}(\mathbf{x})))\|_{\infty} < \beta + \epsilon,
$$
where $\beta = \beta (\delta, c, r, L, d)$ is a constant that depends only on $\delta, c, r,L, d$.

\end{theorem*}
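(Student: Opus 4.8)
The plan is to reduce the estimate to two clean facts — translation invariance of softmax and the saturation of softmax on very negative coordinates — using the clipping behaviour of $D_n$ only where it actually matters. First I would exploit that softmax is invariant under adding a constant to all coordinates: for any scalar $m$, $\Softmax(\mathbf{x} - m\mathbf{1}) = \Softmax(\mathbf{x})$. Since $\mathsf{Norm}(\mathbf{x}) = \mathbf{x} - m\mathbf{1}$ with $m = \mathsf{Amax}(\mathbf{x})$, this gives $\Softmax(\mathsf{Norm}(\mathbf{x})) = \Softmax(\mathbf{x})$ \emph{exactly}, irrespective of the approximation error in $\mathsf{Amax}$. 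Thus the only role of the approximate maximum is to guarantee domain containment: writing $\mathbf{x}' = \mathsf{Norm}(\mathbf{x})$, I would verify that each $x_j' = x_j - m$ lies in $[-L^nR, L^nR]$. This is exactly where the factor $\tfrac12$ enters — since $x_j, m \in [-\tfrac12 L^n R, \tfrac12 L^n R]$ (the latter up to the small error of $\mathsf{Amax}$), we get $|x_j'| \le L^n R$, so $D_n(\mathbf{x}')$ is defined and lands in $[-R,R]^c$.

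With containment established, a single triangle inequality separates the two error sources:
\begin{align*}
\|\Softmax(\mathbf{x}') - p(D_n(\mathbf{x}'))\|_\infty
&\le \|\Softmax(\mathbf{x}') - \Softmax(D_n(\mathbf{x}'))\|_\infty \\
&\quad + \|\Softmax(D_n(\mathbf{x}')) - p(D_n(\mathbf{x}'))\|_\infty.
\end{align*}
Because $D_n(\mathbf{x}') \in [-R,R]^c$, the second term is $< \epsilon$ directly by the hypothesis on $p$. Everything therefore reduces to bounding the first term by $\beta$.

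For that main estimate I would insert the ideal clipping map onto $[-R,R]^c$ and split once more. The clipping part is where saturation of softmax is used: after normalization the maximal coordinate is $\approx 0$, so the denominator $\sum_j e^{x_j'} \ge 1$, and clipping only raises the very negative coordinates (those below the radius $r$ on which $D_n$ faithfully acts as the identity) from $e^{x_j'}$ up to at most $e^{-r}$. Hence the denominator increases by at most $(c-1)e^{-r}$ and each coordinate of softmax moves by $O((c-1)e^{-r})$, a bound \emph{independent} of how negative the clipped coordinates are. The remaining discrepancy $\|\Softmax(\mathsf{clip}(\mathbf{x}')) - \Softmax(D_n(\mathbf{x}'))\|_\infty$ I would control through the Lipschitz continuity of softmax (whose per-coordinate $\ell_1$ gradient norm is at most $\tfrac12$, yielding a $c$-dependent $\ell_\infty$ Lipschitz constant) combined with the DEP approximation guarantee of \cite{cheon2022efficient}, which bounds $\|\mathsf{clip}(\mathbf{x}') - D_n(\mathbf{x}')\|_\infty$ in terms of $\delta, r, L, d$. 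Summing the two contributions produces $\beta = \beta(\delta, c, r, L, d)$.

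The main obstacle I anticipate is that a naive Lipschitz bound on $\|\Softmax(\mathbf{x}') - \Softmax(D_n(\mathbf{x}'))\|_\infty$ is hopeless: for a coordinate with $x_j' \ll -R$, the quantity $|x_j' - D_n(x_j')|$ is of order $L^n R$, so the Lipschitz route alone would blow up. The entire argument hinges on routing through $\mathsf{clip}$ and invoking saturation, so that large deviations occur only on coordinates carrying negligible softmax mass. A secondary technical point is importing the precise DEP error bound from \cite{cheon2022efficient} and tracking how $r, L, d$ and the tolerance $\delta$ enter — in particular confirming that $D_n$ sends far-negative inputs toward the boundary $-R$ rather than back toward $0$, and that the transition region between the identity zone of radius $r$ and the full range $[-R,R]$ contributes no uncontrolled error.
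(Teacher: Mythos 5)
Your overall strategy is the same as the paper's: use translation invariance to replace $\mathbf{x}$ by $\mathbf{x}'=\mathsf{Norm}(\mathbf{x})$, a triangle inequality to peel off the $\epsilon$ from $p$ on $[-R,R]^c$, saturation of softmax on far-negative coordinates, and a Lipschitz-plus-DEP estimate on the faithful zone. The domain-containment role of $\mathsf{Amax}$ and the origin of the factor $\tfrac12$ are also correctly identified (though note the paper additionally needs $\mathsf{Amax}(\mathbf{x})\le\max(\mathbf{x})$ and error $\le r$, so that the top normalized coordinate lies in $[0,r]$ — this is what makes both the saturation lemma and the Lipschitz step applicable).

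The gap is in your treatment of $\|\sm(\mathsf{clip}(\mathbf{x}')) - \sm(D_n(\mathbf{x}'))\|_\infty$. You propose to bound it by the Lipschitz constant of softmax times $\|\mathsf{clip}(\mathbf{x}') - D_n(\mathbf{x}')\|_\infty$, invoking the DEP guarantee for the latter. But the DEP guarantee ($|x - D_n(x)| \le \delta r^3L^2/(L^2-1)$) holds only on the identity zone $[-r,r]$; for a coordinate with $x_j'$ below $-r$, the quantity $|\mathsf{clip}(x_j') - D_n(x_j')|$ is controlled by nothing better than $O(R)$ (e.g.\ $D_n$ may map all of $[-L^nR,-r]$ onto $[-R,-D_n(r)]$ while $\mathsf{clip}$ fixes or caps those coordinates), so the Lipschitz route still blows up — to order $R$ rather than $L^nR$, but that is not the $r$-and-$\delta$-only bound $\beta$ requires (indeed $\beta$ in the formal Theorem~\ref{thm:softmaxbound} does not depend on $R$ at all). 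The missing idea is a per-coordinate case split at threshold $-r$ applied to \emph{both} sides: the paper truncates away the coordinates below $-r$ (Lemma~\ref{lem:softmaxtrunc}) before the Lipschitz step, so that Lemma~\ref{lem:softmaxmvt} is only ever applied to vectors in $[-r,r]^k$, and then pays a \emph{second} saturation term for $\sm(D_n(\mathbf{x}'))$. That second term is not free: one must show $D_n$ sends inputs below $-r$ to values below $-D_n(r)$ with $D_n(r)$ still close to $r$, which is exactly Lemma~\ref{lem:dnlowerbound} ($D_n(r) \ge r - \delta L^2 r^3/(L^2-1)$) and accounts for the middle summand of $\beta$. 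You flag this obstacle in your final paragraph ("large deviations occur only on coordinates carrying negligible softmax mass"), which is the right diagnosis, but the two-term decomposition you actually write down does not implement it, and the lower bound on $D_n(r)$ — an ingredient with no analogue in your sketch — is needed to close it.
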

The proof is stated in Appendix~\ref{appendix:softmax}
with a formal version of the theorem. (See Theorem \ref{thm:softmaxbound}.)
The theorem implies that
we can approximate softmax on a given domain $[-R_0, R_0]^c$ with $O(\log R_0)$ operations.
The precise algorithm can be found in Algorithm \ref{alg:softmax} in the Appendix.

\subsection{Encrypted Matrix Multiplication}
We first explain how to encode a matrix into ciphertext(s) and how to perform encrypted matrix multiplications of the forms $AB^{\T}$ and $A^{\T}B$ to compute logits and gradients in \textbf{HETAL}.
The main goal of our algorithm is to reduce the number of rotations and multiplications used, which occupy most of the algorithm runtime (see Section \ref{sec:5.1} for the costs of each operation in HE).
Note that including a transpose in multiplication is more efficient than computing matrix multiplication of the form $AB$ directly as in \cite{jiang2018secure, huang2021more}, since we need to perform transpose for each iteration of training, which is a costly operation and requires additional multiplicative depth.

\begin{figure}
    \centering
    \includegraphics[width=0.45\textwidth]{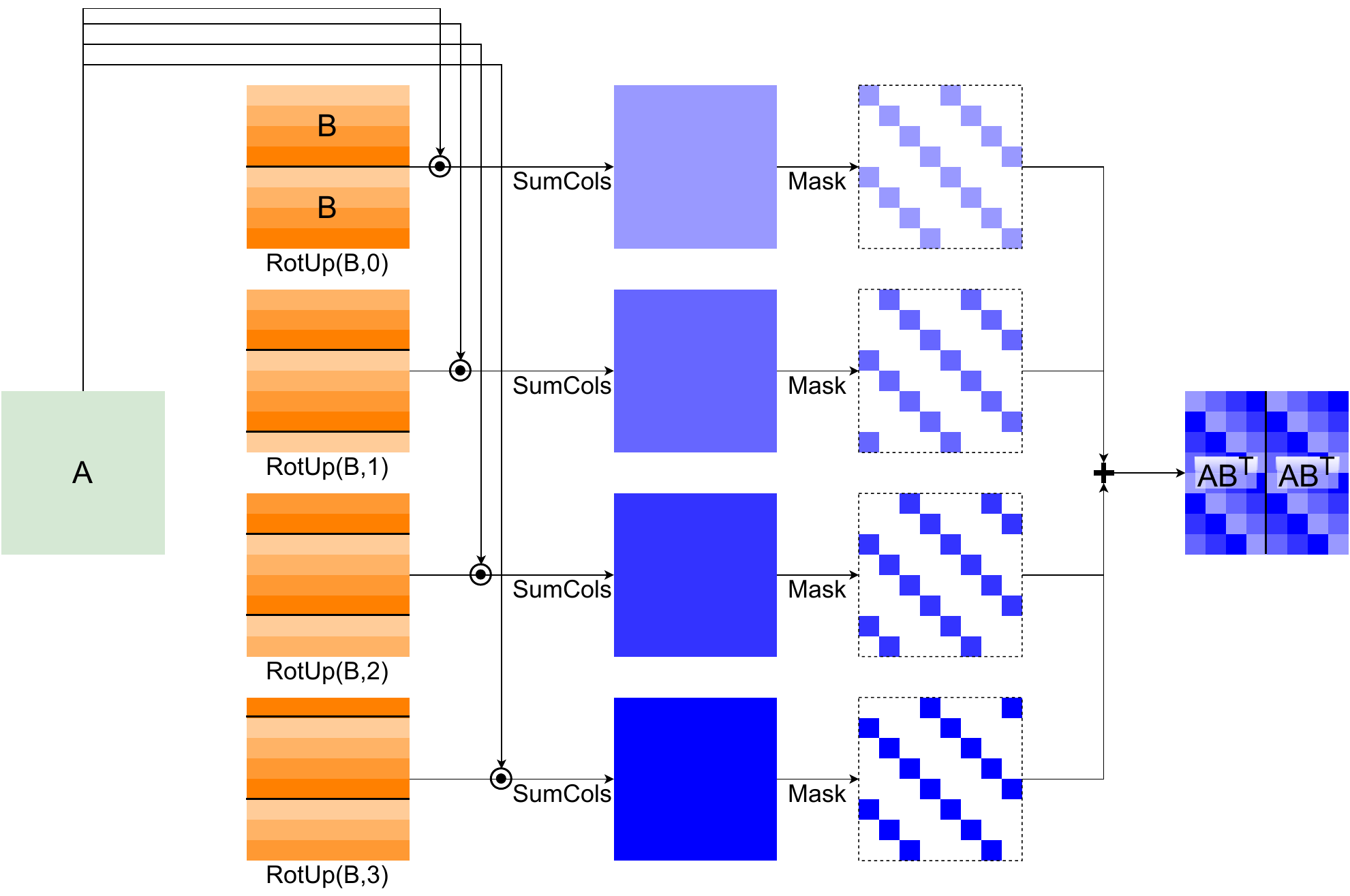}
    \caption{Demonstration of \textsf{DiagABT} algorithm when $A \in \mathbb{R}^{8 \times 8}$ and $B \in \mathbb{R}^{4 \times 8}$. We do not include the complexification optimization in the figure for simplicity.}
    \label{fig:diagabt}
\end{figure}

\subsubsection{Encoding}
We used the same encoding method as in \cite{crockett2020low},
dividing a given matrix as submatrices of a fixed shape $s_0 \times s_1$ and encode each submatrix in the row-major order.
Here, each submatrix corresponds to a single ciphertext so that the number of entries in each submatrix equals the number of slots in a single ciphertext, i.e., $s_{0}s_{1} = s$. 
For convenience, we assume that all matrices are sufficiently small to fit into a single ciphertext.
We also assume that the number of rows and columns of the matrices are powers of two by applying zero padding when required.
The algorithms can be easily extended to larger matrices (composed of multiple ciphertexts). The details can be found in the Appendix.

\subsubsection{Computation of $AB^{\T}$}
Let $A \in \mathbb{R}^{a \times b}$ and $B \in \mathbb{R}^{c \times b}$ be two matrices. 
Our goal is to compute $AB^{\T} \in \mathbb{R}^{a \times c}$ using basic HE operations such as addition, multiplication, and rotation.
For this, we use tiling, off-diagonal masking and complexification to reduce the computational complexity.

First, we define some operations and notations.
For a given matrix $B$, we define $\RU(B, k)$ as a matrix $B'$ obtained by rotating the rows of $B$ in the upper direction by $k$, i.e.,
$$
B'_{i, j} = B_{(i + k) \,\mathrm{mod}\, c, j}.
$$
When $B$ is encoded in a row-wise manner, $\RU(B, k)$ can be obtained from $B$ by simply applying the left rotation of index $kb$, i.e., $\RU(B, k) = \lrot(B, kb)$.

Next, for an $s_{0} \times s_{1}$ matrix $X$,  we define $\CS(X)$ as a matrix with entries
$$
\CS(X)_{i, j} = \sum_{0\leq k < s_{1}}X_{i, k}.
$$
In other words, each column of $\CS(X)$ is the sum of the columns in $X$.
This can be computed with $2\log s_{1}$ rotations and one constant multiplication: see Algorithm 2 in \cite{crockett2020low} for more detail.

We also define matrix $\overline{B}$ as an $s_{0} \times b$ matrix, where $(s_{0}/c)$ copies of $B$ are tiled in the vertical direction.
Finally, we define $\overline{B}_{\cplx}$, \emph{complexification of $\overline{B}$}, as
$$
\overline{B}_{\cplx} = \overline{B} + \sqrt{-1} \RU(\overline{B}, c/2).
$$
Note that multiplying $i = \sqrt{-1}$ does not consume a multiplicative depth.

Using the operations defined above, we compute $A\overline{B}^{\T}$ as follows.
Note that $A\overline{B}^{\T}$ is a matrix containing $(s_{1} / c)$ copies of $AB^{\T}$ in the horizontal direction.
\begin{proposition}
\label{prop:abt}
Let $A$ and $B$ be defined as above.
We have $A\overline{B}^{\T} = X + \conj(X)$, where
\begin{equation*}
X = \sum_{0 \leq k < c/2} \CS(A \odot \RU(\overline{B}_{\cplx}, k)) \odot M_{\cplx}^{(k, c)}.
\label{eqn:cplxabt}
\end{equation*}
Here $M^{(k, c)}$ is an off-diagonal masking matrix with entries
$$
M^{(k, c)}_{i, j} = 
    \begin{cases}
        1 & j \equiv i + k \,(\mathrm{mod}\, c) \\
        0 & \text{otherwise}
    \end{cases}
$$
and $M_{\cplx}^{(k, c)}$ is a \emph{complexified} version of the mask given by
$$
M_{\cplx}^{(k, c)} = \frac{1}{2} M^{(k, c)} - \frac{\sqrt{-1}}{2} M^{(k + c/2, c)}.
$$
\end{proposition}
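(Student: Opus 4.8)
The plan is to verify the asserted identity entrywise, since both sides are $s_0 \times s_1$ matrices over $\mathbb{C}$. First I would record the reduction coming from tiling: because $\overline{B}$ stacks $s_0/c$ copies of $B$ with period $c$ along the rows, one has $(A\overline{B}^{\T})_{i,j} = \sum_l A_{i,l} B_{j \bmod c,\, l} = (AB^{\T})_{i,\, j \bmod c}$, so it suffices to show that $X + \conj(X)$ reproduces $(AB^{\T})_{i,\, j\bmod c}$ for every index pair $(i,j)$. For brevity I would write $T_{i,j} := (AB^{\T})_{i,\, j\bmod c}$; note that $T$ is real and $c$-periodic in its second index, so $T_{i,\,j+c/2}$ is well defined.

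Next I would expand the summand. Unwinding the definitions of complexification and $\RU$ gives $\RU(\overline{B}_{\cplx}, k)_{i,l} = B_{(i+k)\bmod c,\, l} + \sqrt{-1}\, B_{(i+k+c/2)\bmod c,\, l}$. Forming the Hadamard product with $A$ and applying $\CS$ (which sums over all $s_1$ columns, the zero-padding outside the first $b$ columns guaranteeing that this sum is the genuine inner product) produces a quantity independent of the column index $j$, namely
$$
\CS\!\big(A \odot \RU(\overline{B}_{\cplx}, k)\big)_{i,j} = T_{i,\,i+k} + \sqrt{-1}\, T_{i,\,i+k+c/2}.
$$
Thus each $k$ packs the two diagonals at offsets $k$ and $k+c/2$ into a single complex slot, which is precisely why the outer sum needs to range only over $0 \le k < c/2$.

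The heart of the argument is the masking bookkeeping, and this is where I expect the main obstacle. Fixing $(i,j)$ and setting $d := (j-i)\bmod c$, I would split into the cases $0 \le d < c/2$ and $c/2 \le d < c$. In the first case only the term $M^{(k,c)}$ is nonzero, and only at $k=d$; in the second case only $M^{(k+c/2,c)}$ is nonzero, and only at $k = d - c/2$. Substituting the packed value above together with the coefficients $\tfrac12$ and $-\tfrac{\sqrt{-1}}{2}$ read off from $M_{\cplx}^{(k,c)}$, one finds in both cases $X_{i,j} = \tfrac12 T_{i,j} \pm \tfrac{\sqrt{-1}}{2} T_{i,\,j+c/2}$, with the sign determined by the case. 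The delicate point is tracking the four modular reductions at once — two from the rotation and complexification of $\overline{B}$, and two from the shifted masks $M^{(k,c)}$ and $M^{(k+c/2,c)}$ — and checking that the imaginary cross-term always lands on $T_{i,\,j+c/2}$ with a purely imaginary coefficient, so that no real contamination survives. Finally, forming $X + \conj(X)$ and using that $T$ is real cancels the imaginary cross-terms and doubles the $\tfrac12 T_{i,j}$ contribution, yielding $(X+\conj(X))_{i,j} = T_{i,j} = (A\overline{B}^{\T})_{i,j}$, which completes the proof.
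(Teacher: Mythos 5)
Your proof is correct and follows essentially the same route as the paper's: an entrywise verification in which the off-diagonal mask selects the unique contributing rotation index and the complexification packs the diagonals at offsets $k$ and $k+c/2$ into one complex slot, recovered by taking $X+\conj(X)$. The only difference is organizational — the paper first proves the real identity $A\overline{B}^{\T} = \sum_{0\leq k<c} \CS(A\odot\RU(\overline{B},k))\odot M^{(k,c)}$ and then regroups pairs of summands via $\Re((x+zi)(y-wi))=xy+zw$, whereas you verify the complexified formula directly with an explicit case split on $(j-i)\bmod c$; both are sound.
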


Figure~\ref{fig:diagabt} illustrates the proposed multiplication method based on Proposition \ref{prop:abt}.
The detailed procedure for the proposed algorithm, $\mathsf{DiagABT}$,
is presented as
Algorithm~\ref{alg:abt} in the Appendix.
Note that the number of rotations is a bottleneck for matrix multiplication, and tiling reduces it from $O(s_{0}\log s_{1})$ to $O(c\log s_{1})$.
This also fits into our case for computing $\mathbf{X}\mathbf{W}^{\T}$, because the number of rows of $\mathbf{W}$ equals the number of classes for the dataset, which is often small compared to $s_{0}$ or $s_{1}$.
In addition, complexification reduces the complexity by half
(a similar idea was used in \cite{hong2022secure}).
Finally, we can extend the algorithm to compute $tAB^{\T}$ for $t \in \mathbb{R}$ without additional multiplicative depth consumption by replacing the diagonal mask $M^{(k, c)}_{\cplx}$ with $tM^{(k, c)}_{\cplx}$.
Algorithm~\ref{alg:abt} adopts this optimization.

\subsubsection{Computation of $A^{\T}B$}
We use similar ideas as $AB^{\T}$ for efficient computation of $A^{\T}B$. In addition, we propose a new operation, i.e., partial rotation, to reduce multiplicative depth.
Let $A\in \mathbb{R}^{a \times c}$ and $B \in \mathbb{R}^{a \times b}$.
We can similarly compute $A^{\T}B$.
First, we define $\RL({A}, k)$ as a matrix $A'$ obtained by rotating the columns of $A$ in the left direction by $k$, i.e., 
$$
A'_{i, j} = A_{i, (j + k) \mathrm{mod\,} c}.
$$
Unlike $\RU$, this consumes a multiplication depth.

Similar to $\CS$, we can also define $\RS(X)$ for an $s_{0} \times s_{1}$ matrix $X$ as
$$
\RS({X})_{i, j} = \sum_{0 \leq k < s_{0}} X_{k, j}.
$$
In other words, each row of $\RS({X})$ is the sum of the rows in $X$.
This can be achieved using with $\log s_{0}$ rotations without additional depth consumption.

As in the case of $AB^{\T}$, we can apply tiling and complexification to reduce the computational complexity.
We denote $\underline{A}$ for an $a \times s_{1}$ matrix where $(s_{1}/c)$ copies of $A$ are tiled in the horizontal direction.
We also define complexification $\underline{A}_{\cplx}$ of $\underline{A}$ as
$$
\underline{A}_{\cplx} = \underline{A} + \sqrt{-1} \RL(\underline{A}, c/2).
$$
However, since $\RL$ consumes a multiplicative depth, the level of $\underline{A}_{\cplx}$ is smaller than that of $\underline{A}$ by one. This may increase the multiplicative depth of $\underline{A}^{\T}B$ by one when the level of $\underline{A}$ is smaller than that of $B$.
To address this issue, we propose an algorithm that eventually consumes $B$'s level instead of $A$'s.
We first define a new operation $\PRU(-, k)$.
For a matrix $B$, $\PRU(B, k)$ is defined as a matrix that the last $k$ columns are rotated upwards by one position.
For example, the following figure shows its effect on a matrix of shape $4 \times 8$ with $k = 3$.

\vspace{-1ex}
\begin{figure}[h]
    \centering
    \includegraphics[width=0.35\textwidth]{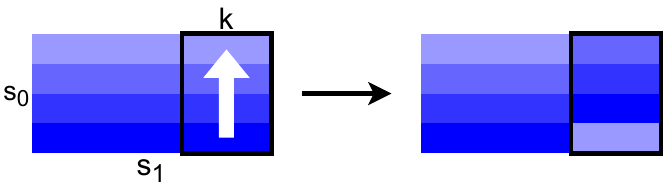}
    \label{fig:pru}
\end{figure}
\vspace{-1ex}
We can compute this homomorphically with a single $\mathsf{CMult}$ and $\lrot$, consuming a multiplicative depth (see Algorithm \ref{alg:pru} in Appendix).
Using this new operation $\PRU$, $\underline{A}^{\T}B$ can be expressed as follows:
\begin{proposition}
\label{prop:atbdepth}
$\underline{A}^{\T}B =  X + \conj(X)$, where 
{\scriptsize
\[
X = \sum_{0 \leq k < c/2} \RS(\lrot(\underline{A}_{\cplx}, k) \odot \PRU(B, k)) \odot M_{\cplx}^{(-k, a)}. 
\]
}
\end{proposition}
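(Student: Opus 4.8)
The plan is to prove the identity in two stages, exactly mirroring the structure of the proof of Proposition~\ref{prop:abt}: first establish a purely \emph{real} diagonal decomposition
$$\underline{A}^{\T}B = \sum_{0\le k < c}\RS\!\left(\lrot(\underline{A},k)\odot\PRU(B,k)\right)\odot M^{(-k,a)},$$
and then upgrade it to the stated complexified form by linearity together with the real-part extraction $X+\conj(X)$. Throughout I would view both factors as $s_0\times s_1$ grids, where $\underline{A}$ carries $s_1/c$ horizontal copies of $A$ so that $\underline{A}_{l,m}=A_{l,\,m\bmod c}$, and $B$ occupies its first $b$ columns.

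First I would expand $(\underline{A}^{\T}B)_{ij}=\sum_{l} A_{l,\,i\bmod c}B_{l,j}=(A^{\T}B)_{i\bmod c,\,j}$, so that $\underline{A}^{\T}B$ is just vertically stacked copies of $A^{\T}B$, and it suffices to recover each entry $(A^{\T}B)_{r,j}=\sum_l A_{l,r}B_{l,j}$. The delicate point is that the algorithm uses the \emph{flat} rotation $\lrot(\cdot,k)$ rather than the column rotation $\RL(\cdot,k)$. For the first $s_1-k$ columns these agree, giving $\lrot(\underline{A},k)_{l,m}=A_{l,\,(m+k)\bmod c}$; but on the last $k$ columns the flat rotation pulls entries across the row boundary, yielding $A_{(l+1)\bmod s_0,\,(m+k)\bmod c}$. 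I would then verify that $\PRU(B,k)$, which cyclically shifts exactly these last $k$ columns up by one row so that $B_{l,m}\mapsto B_{(l+1)\bmod s_0,\,m}$ on precisely the bleeding block, makes both factors of the Hadamard product evaluate at the \emph{same} row index ($l$ on the clean columns, $l+1$ on the bleeding columns). Applying $\RS$ then sums over all $s_0$ rows; on the bleeding block a harmless cyclic reindexing $l'=l+1$ (zero-padded rows contribute nothing) turns the contribution into $\sum_{l}A_{l,(m+k)\bmod c}B_{l,m}=(A^{\T}B)_{(m+k)\bmod c,\,m}$, identical to the clean columns. Hence $\RS(\lrot(\underline{A},k)\odot\PRU(B,k))$ is constant down each column $j$ with value $(A^{\T}B)_{(j+k)\bmod c,\,j}$, and the mask $M^{(-k,a)}$, supported on $j\equiv i-k$, retains it only in the output row $i\equiv j+k$; summing over $0\le k<c$ hits every output row exactly once.

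For the complexified version I would expand $\lrot(\underline{A}_\cplx,k)=\lrot(\underline{A},k)+\sqrt{-1}\,\lrot(\RL(\underline{A},c/2),k)$ and observe that, since $\RL(\cdot,c/2)$ is a \emph{clean} column rotation, the only cross-row bleeding in either the real or the imaginary part comes from the outer $\lrot(\cdot,k)$ and is confined to the same last $k$ columns. Thus the single correction $\PRU(B,k)$ realigns both parts at once: the real part reproduces the diagonal-$k$ term above and the imaginary part reproduces the diagonal-$(k+c/2)$ term. Multiplying by the complexified mask $M^{(-k,a)}_\cplx$ (defined analogously to $M^{(k,c)}_\cplx$ in Proposition~\ref{prop:abt}) routes these onto the diagonals $-k$ and $-(k+c/2)$, and forming $X+\conj(X)$ extracts them with the correct signs, so the $c/2$ complex summands reassemble all $c$ real diagonals. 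This step is word-for-word the bookkeeping already done for $AB^{\T}$, so I would simply invoke the computation from Proposition~\ref{prop:abt}.

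The main obstacle is the second paragraph: correctly accounting for the cross-row bleeding introduced by replacing $\RL$ with the cheaper $\lrot$, and checking that $\PRU(B,k)$ cancels it \emph{exactly} — including the wraparound at the last grid row and the requirement that the same shift serve both the diagonal-$k$ (real) and diagonal-$(k+c/2)$ (imaginary) parts. I would also confirm the depth accounting is consistent: $\PRU$ and the complexifying $\RL$ each consume one level, and the construction is arranged so that the level of the product is governed by $B$ rather than by $\underline{A}$, which is the entire reason for introducing $\PRU$ in place of a direct $\RL$-based alignment.
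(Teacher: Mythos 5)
Your proposal is correct and follows essentially the same route as the paper's proof: the key step in both is that on the last $k$ columns the flat rotation $\lrot(\cdot,k)$ and the compensating $\PRU(B,k)$ shift both factors' rows by the same cyclic offset, so the column sums after $\RS$ agree with those of the clean $\RL(\underline{A}_{\cplx},k)\odot B$ product, after which the diagonal decomposition and complexification are handled exactly as in Proposition~\ref{prop:abt}. (The paper phrases this as the identity $\RS(\lrot(\underline{A}_{\cplx},k)\odot\PRU(B,k))=\RS(\RL(\underline{A}_{\cplx},k)\odot B)$ followed by a reduction to the $AB^{\T}$ argument, whereas you verify the real decomposition entrywise first and complexify afterward, but the content is the same.)
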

Like $AB^{\T}$, tiling and complexification has an effect of reducing the number of rotations from $O(s_{1}\log s_{0})$ to $O(c\log s_{0})$.
Algorithm \ref{alg:atbcplxlowdepth} of Appendix adopts this optimization.

\section{Experimental results}\label{section:experiment}

\subsection{Experimental setup}
\label{sec:5.1}

We used HEaaN \cite{heaanlib}, a homomorphic encryption library based on the RNS version of the CKKS scheme \cite{cheon2018full}, which supports bootstrapping \cite{cheon2018bootstrapping} and GPU acceleration.
We take $2^{16}$ as a cyclotomic ring dimension (so that each ciphertext has $s = 2^{16-1} = 32768$ slots) and ciphertext modulus $q \approx 2^{1555}$, which ensures a 128-bit security level under the SparseLWE estimator \cite{cheon2022practical}.

We used an Intel Xeon Gold 6248 CPU at 2.50GHz, running with 64 threads, and a single Nvidia Ampere A40 GPU.
Each operation's execution time is measured as follows: \textsf{Add}: 0.085 ms, \textsf{Rotate}: 1.2 ms,  \textsf{CMult}: 0.9 ms, \textsf{Mult}: 1.6 ms, and \textsf{Bootstrap}: 159 ms.

\begin{table*}[t]
\centering
\setlength\extrarowheight{2pt}
\resizebox{2\columnwidth}{!}{%
\begin{tabular*}{\textwidth}{@{\extracolsep{\fill}}*{6}{c}}
\toprule
\multirow{3}{*}{dataset}& \multicolumn{3}{c}{encrypted} & \multicolumn{1}{c}{not encrypted} \\
\cline{2-4}  \cline{5-5}
& \multicolumn{2}{c}{Running time} & \multirow{2}{*}{ACC (a)} &  \multirow{2}{*}{ACC (b)} &\multirow{2}{*}{ACC loss ((b) - (a))}\\
\cline{2-3}
& Total (s) & Time / Iter (s) & & &  \\
\cline{1-6}
MNIST & 3442.29 & 9.46 & 96.73\% & 97.24\% & 0.51\%\\
CIFAR-10 & 3114.30 & 15.72 & 96.57\% & 96.62\% & 0.05\% \\
Face Mask Detection & 566.72 & 4.29 & 95.46\% & 95.46\% & 0.00\% \\
DermaMNIST & 1136.99 & 7.06 & 76.06\% & 76.01\% & -0.05\% \\
SNIPS & 1264.27 & 6.95 & 95.00\% & 94.43\% & -0.57\% \\
\bottomrule
\end{tabular*}
}
\caption{Transfer learning results on 5 benchmark datasets.}
\label{tab:transfer}
\end{table*}

\subsection{Transfer learning}

We used five benchmark datasets for image classification and sentiment analysis: MNIST \cite{deng2012mnist}, CIFAR-10 \cite{krizhevsky2009learning}, Face Mask Detection \cite{larxel2020face}, DermaMNIST \cite{yang2023medmnist}, and SNIPS \cite{coucke2018snips}.
These benchmarks are widely used or private.
In addition, We used the pre-trained ViT \cite{dosovitskiy2021an} (\texttt{ViT-Base}) and MPNet \cite{song2020mpnet} (\texttt{MPNet-Base}) as feature extractors for image and natural language data, respectively.
Both models embed a data point into a single 768-dimensional vector.

The results are shown in Table \ref{tab:transfer}.
For early stopping, we set the patience to 3.
Table \ref{tab:transfer} shows that we fine-tuned the encrypted models on all benchmark datasets within an hour.
In addition, the accuracy drops of the encrypted models were at most 0.51\%, compared to the unencrypted models with the same hyperparameters.
(See Table \ref{tab:hyperparam} in the Appendix for hyperparameters
and the number of epochs for early-stopping.).
The time required to transmit logits on the validation datasets is negligible compared to the total execution time,
because the total size of ciphertexts for encrypting logits is at most 8.8 MB for each epoch.

We also note that our method scales robustly with larger models, such as \texttt{ViT-Large} that embeds a datapoint into a 1024-dimensional vector (see Appendix \ref{appendix:largermodel} and Table \ref{tab:largevit}).

\vspace{-1ex}
\subsection{Softmax Approximation}

In the above experiments, we used our new softmax approximation that covers inputs in the range $[-128, 128]$
(See the Appendix for the detailed parameters).
To estimate the approximation error,
we used Monte Carlo simulation,
because it is computationally intractable to find the exact maximum error of functions with many variables.
To be precise, we sampled 300 M points on the domain and found that the maximum error was 0.0037--0.0224 and the average error was 0.0022--0.0046 depending on the input dimension.
Table~\ref{tab:softmaxerr} in Appendix \ref{appendix:softmaxcomparison} shows that these errors are significantly smaller than those of the previous methods.

Figure~\ref{fig:softmaxinput} explains the reason for this improvement.
It shows how the minimum and maximum values of input of softmax vary as the training proceeds.
The value increases in the order of two, which cannot be handled by previous approximation methods \cite{lee2022privacy, hong2022secure, jin2020secure}.
This shows that, with the previous approximation methods, it is hard to train a model as much as we want.

\pgfplotstableread[col sep = comma]{plots/cifar10.csv}\cifardata
\pgfplotstableread[col sep = comma]{plots/mnist.csv}\mnistdata
\pgfplotstableread[col sep = comma]{plots/facial-mask-detection.csv}\maskdata
\pgfplotstableread[col sep = comma]{plots/dermamnist.csv}\dermadata
\pgfplotstableread[col sep = comma]{plots/snips.csv}\snipsdata

\begin{figure}
    \centering
    \begin{tikzpicture}
      \begin{axis}[
        legend pos = north west,
        legend style={nodes={scale=0.5, transform shape}},
        xmin = 0, xmax = 365,
        ymin = -64, ymax = 128,
        ytick = {-64, -32, -8,  8, 32, 64, 128}
        ]
        \addplot[line width = 1.0pt, red] table [x index = {0}, y index = {1}] {\mnistdata};
        \addplot[line width = 1.0pt, orange] table [x index = {0}, y index = {1}] {\cifardata};
        \addplot[line width = 1.0pt, green] table [x index = {0}, y index = {1}] {\maskdata};
        \addplot[line width = 1.0pt, blue] table [x index = {0}, y index = {1}] {\dermadata};
        \addplot[line width = 1.0pt, purple] table [x index = {0}, y index = {1}] {\snipsdata};
        \addplot[mark=none, gray, line width=0.1pt] coordinates {(0,32) (365,32)};
        \addplot[mark=none, gray, line width=0.1pt] coordinates {(0,-32) (365,-32)};
        \addplot[mark=none, gray, line width=0.1pt] coordinates {(0,8) (365,8)};
        \addplot[mark=none, gray, line width=0.1pt] coordinates {(0,-8) (365,-8)};
        \addplot[mark=none, gray, line width=0.1pt] coordinates {(0,4) (365,4)};
        \addplot[mark=none, gray, line width=0.1pt] coordinates {(0,-4) (365,-4)};
        \addplot[line width = 1.0pt, red] table [x index = {0}, y index = {2}] {\mnistdata};
        \addplot[line width = 1.0pt, orange] table [x index = {0}, y index = {2}] {\cifardata};
        \addplot[line width = 1.0pt, green] table [x index = {0}, y index = {2}] {\maskdata};
        \addplot[line width = 1.0pt, blue] table [x index = {0}, y index = {2}] {\dermadata};
        \addplot[line width = 1.0pt, purple] table [x index = {0}, y index = {2}] {\snipsdata};
        \legend{MNIST,CIFAR-10,Facial Mask Detection,DermaMNIST,SNIPS}
      \end{axis}
    \end{tikzpicture}
    \caption{Maximum and minimum value of input of softmax at each step (minibatch) for each dataset.}
    \vspace{-2ex}
    \label{fig:softmaxinput}
\end{figure}
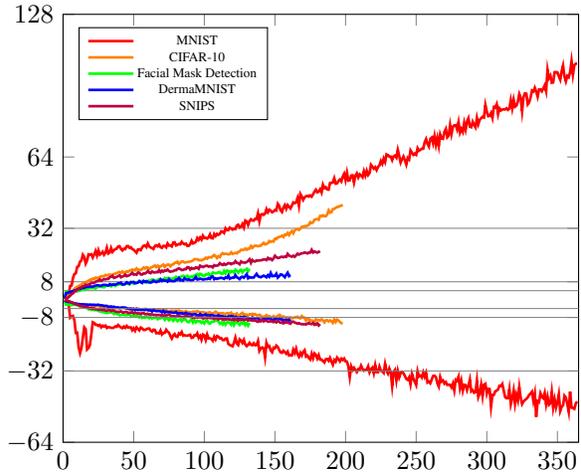

\begin{table*}[ht]
\centering
\resizebox{2\columnwidth}{!}{%
\setlength\extrarowheight{2pt}
\setlength\tabcolsep{1.5pt}
\begin{tabular}{c|ccc|c|c|ccc|c|c}
\toprule
\multirow{2}{*}{$(a, b, c)$}& \multicolumn{5}{c|}{$AB^{\T}$ ($A \in \mathbb{R}^{a\times b}, B \in \mathbb{R}^{c \times b}$)} & \multicolumn{5}{c}{$A^{\T}B$ ($A \in \mathbb{R}^{a \times c}, B \in \mathbb{R}^{a \times b}$)} \\
\cline{2-11}
 & \cite{jin2020secure}$^{*}$ & \textsf{ColMajor}$^\dagger$ & \textsf{DiagABT} & \multicolumn{2}{c|}{Speedup} & \cite{jin2020secure}$^{*}$ & \textsf{RowMajor}$^\dagger$ & \textsf{DiagATB} & \multicolumn{2}{c}{Speedup} \\
\midrule
$(128, 128, 4)$ & 0.8192  & 0.1104 & \textbf{0.0601} & 13.63 & 1.84 & 10.0352 & 0.1171 & \textbf{0.0415} & 241.81 & 2.82 \\
$(256, 256, 8)$ & 3.2768 & 0.3203 & \textbf{0.1211} & 27.06 & 2.64 & 40.1408 & 0.3167 & \textbf{0.1239} & 323.98 & 2.56 \\
$(512, 769, 4)$ & 4.9216 & 0.7609 & \textbf{0.1223} & 40.24 & 6.22 & 60.2896 & 0.7176 & \textbf{0.3343} & 180.35 & 2.15 \\
$(1024, 769, 8)$ & 9.8432 & 3.0428 & \textbf{0.3710} & 26.53 & 8.20 & 120.5792 & 2.8546 & \textbf{1.2558} & 96.02 & 2.27 \\
$(2048, 769, 16)$ & 19.6864 & 12.6251 & \textbf{1.2376} & 15.91 & 10.20 & 241.1584 & 11.8220 & \textbf{4.9970} & 48.26 & 2.37 \\
\bottomrule
\end{tabular}
}
\caption{Comparison of matrix multiplication algorithms (running time in seconds). $^\ast$For \cite{jin2020secure}, we report estimated running times due to memory issues. The actual number of (constant) multiplications  and rotations can be found in Appendix. $^\dagger$\cite{crockett2020low}.}

\label{tab:matmul_comp}
\end{table*}

\vspace{-1ex}
\subsection{Encrypted Matrix Multiplication}

Matrix multiplication accounted for a large portion of the total training time.
For example, when we fine-tuned the model on the CIFAR-10 dataset, the total running time for matrix multiplication took approximately 1712 seconds, which was more than 55\% of the total training time.
This explains the motivation
to develop efficient matrix multiplication algorithms.

We compared our matrix multiplication algorithms with those in \cite{jin2020secure} and \cite{crockett2020low}, and the results are listed in Table \ref{tab:matmul_comp}.
Owing to the memory limitation of the GPU, we could not measure the cost of the algorithms in \cite{jin2020secure},
therefore, we report the estimated costs by counting the number of multiplications and rotations.
The last three rows correspond to the shapes of the data and parameter matrices in our transfer learning experiments.
The table shows that our matrix multiplication algorithms are substantially faster than the baseline algorithms, by 1.8 to 323 times.
We emphasize the importance of minimizing both the number of rotations and multiplications in algorithms, even though a single multiplication takes longer than a single rotation.
To illustrate this point, consider the computation of $AB^\intercal$ with matrices of sizes $A \in \mathbb{R}^{2048 \times 769}$ and $B \in \mathbb{R}^{16 \times 769}$.
With the \textsf{ColMajor} algorithm, this operation requires 784 multiplications but 8703 rotations; hence, the rotation accounts for approximately 90\% of the total cost.
Our algorithm \textsf{DiagABT} significantly reduces the number of operations to 392 multiplications and 456 rotations and yields a speed increase by a factor of 10.2.
Table \ref{tab:matnum} in Appendix \ref{appendix:matrixmult} reports the actual numbers of each operation with various input matrix sizes.

\section{Related Work}

\paragraph{Softmax Approximation}
It is challenging to approximate the softmax function using polynomials in HE. Most previous works could not directly target the softmax but suggest alternative functions. The authors of \cite{al2020privft} used a quadratic polynomial approximation using the Minimax approximation algorithm. For the exponential function approximation, there was no difference in accuracy between Minimax and $L^2$-approximations.
The authors of \cite{lee2022privacy} used the Gumbel softmax function instead of the original softmax function to make the input be in the approximation region.
In \cite{hong2022secure}, the authors used the approximation $\mathsf{AE}_{r, L}(x) = ((2^{r} + x) / L)^{2^r}$ for the scaled exponential function and combined it with Goldschmidt's algorithm to obtain an approximation of softmax.
However, one must carefully select the scaling factor $L$, which is generally difficult.
PrivGD \cite{jin2020secure} trained a model with encrypted data but used one-vs-each softmax \cite{aueb2016one} instead and approximated it with a product of degree-3 $L^2$-approximations of the sigmoid function.

\paragraph{Encrypted Matrix Multiplication}
The algorithms proposed in \cite{crockett2020low, jin2020secure} for computing encrypted matrix multiplications of the forms $AB^{\T}$ and $A^{\T}B$ are different from ours.
The authors of \cite{jin2020secure} used three different types of packing; Row-majored packing (\textsf{RP}), Column-majored packing (\textsf{CP}), and Replicated packing (\textsf{REP}).
A weight matrix was packed with \textsf{REP}, and the input/label matrices were packed with \textsf{CP}.
Although the algorithms in \cite{jin2020secure} only consume one multiplicative depth, their computational complexity is significantly higher than that of ours.
In addition, their method required many blocks to pack a matrix. 
In \cite{crockett2020low}, the author introduced \textsf{ColMajor} and \textsf{RowMajor} algorithms to compute $AB^{\T}$ and $A^{\T}B$, respectively. 
These algorithms aim to extract and replicate rows/columns and view them as matrix-vector/vector-matrix multiplications.
We experimentally showed that the execution times of our algorithms are significantly smaller than that of \cite{crockett2020low}.

Some encrypted matrix multiplication algorithms \cite{jiang2018secure, huang2021more} are of the form $AB$, that is, they do not include transpose. 
As mentioned previously, these are unsuitable for our purpose because they add transpose operations for each training iteration.
In addition, we cannot use \cite{jiang2018secure} since the input and weight matrices cannot be packed into a single block.

\section{Conclusion}

In this study, we proposed HETAL, an efficient HE-based transfer learning algorithm that protects data privacy. We demonstrated its practicality through extensive experiments on five benchmark datasets.
We believe that HETAL can be applied to other domains such as speech classification.
In addition, our matrix multiplication and softmax approximation can be used for various other purposes, such as the encrypted inference of neural networks with softmax activations.
\section*{Acknowledgements}
The authors would like to thank the anonymous reviewers for their valuable comments and suggestions.
This work was supported by Institute of Information \& communications Technology Planning \& Evaluation (IITP) grant funded by the Korea government (MSIT) [No.2022-0-01047, Development of statistical analysis algorithm and module using homomorphic encryption based on real number operation]. Mun-Kyu Lee was also supported by IITP grant funded by the Korea government (MSIT) [No.RS-2022-00155915, Artificial Intelligence Convergence Innovation Human Resources Development (Inha University)].


\bibliography{refs}
\bibliographystyle{icml2023}

\newpage
\appendix
\onecolumn
\section{Softmax approximation}
\label{appendix:softmax}

\subsection{Theoretical error analysis}


\begin{lemma}
\label{lem:softmaxlarge}
Let $\mathbf{x} = (x_1, \dots, x_c) \in \mathbb{R}^{c}$ be a vector with $\max_{1\leq j \leq c} x_j = m \geq 0$.
Assume that $x_i \leq -r$.
Then the $i$-th entry of the output of the softmax is bounded as
$$
0 < \sm(\mathbf{x})_i \leq \frac{1}{1 + e^r}.
$$
\end{lemma}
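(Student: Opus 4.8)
The plan is to unwind the definition $\sm(\mathbf{x})_i = e^{x_i} / \sum_{j=1}^{c} e^{x_j}$ and then bound the denominator from below by retaining only the two most relevant summands. The strict lower bound $\sm(\mathbf{x})_i > 0$ is immediate: the numerator $e^{x_i}$ is positive and every term in the denominator is positive, so the quotient is positive.

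For the upper bound, I would isolate the $i$-th term and a maximal term in the denominator. Let $j_0$ be an index achieving $x_{j_0} = m$. Since the hypotheses give $x_i \le -r \le 0 \le m$, we have $x_i < m$ whenever $r > 0$, so $j_0 \neq i$ and the summands $e^{x_i}$ and $e^{m}$ are genuinely distinct. Keeping just these two,
$$
\sum_{j=1}^{c} e^{x_j} \;\ge\; e^{x_i} + e^{m},
$$
and therefore
$$
\sm(\mathbf{x})_i \;\le\; \frac{e^{x_i}}{e^{x_i} + e^{m}} \;=\; \frac{1}{1 + e^{\,m - x_i}}.
$$
To finish, I would use $m \ge 0$ together with $-x_i \ge r$ to obtain $m - x_i \ge r$, and then invoke the fact that $t \mapsto 1/(1 + e^{t})$ is decreasing, which yields $1/(1 + e^{\,m - x_i}) \le 1/(1 + e^{r})$, as claimed.

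The one place demanding care, and the only substantive point, is guaranteeing that the maximizing index $j_0$ is distinct from $i$ so that both $e^{x_i}$ and $e^{m}$ can be retained as separate terms in the denominator; without this one would only recover the trivial bound $\sm(\mathbf{x})_i \le 1$. This distinctness is exactly what the sign conditions $x_i \le -r$ and $m \ge 0$ supply. The degenerate case $r = 0$ (where one could in principle have $x_i = m = 0$) yields the weaker statement $\sm(\mathbf{x})_i \le \tfrac{1}{2}$ and does not arise in the application, since the lemma is invoked with a strictly positive threshold $r$; I would simply note this and treat $r > 0$ throughout.
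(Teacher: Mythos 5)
Your proof is correct and follows essentially the same route as the paper's: both arguments lower-bound the denominator by discarding all but a couple of terms (in particular retaining the term $e^{x_{j_0}} \ge e^{0} = 1$ coming from $m \ge 0$) and then conclude by a one-variable monotonicity step; your two-term reduction to $1/(1+e^{m-x_i})$ is just a slightly tidier packaging of the paper's chain of inequalities. Your side remark about needing $j_0 \neq i$ (guaranteed for $r>0$) is a fair observation that the paper handles implicitly by working with a sorted vector in the application.
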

\begin{proof}
It is clear that the output is positive.
For the second inequality, we have
\begin{align*}
    \sm(x_1, \dots, x_c) &= \frac{e^{x_i}}{e^{x_1} + \cdots + e^{x_i} + \cdots + e^{x_c}} \\
    &\leq \frac{e^{x_i}}{e^{x_1} + \cdots + e^{x_{i-1}} + e^{x_i}} \\
    &\leq \frac{e^{-r}}{e^{x_1} + \cdots + e^{x_{i-1}} + e^{-r}} \qquad\cdots(1) \\
    &\leq \frac{e^{-r}}{1 + e^{-r}} = \frac{1}{1 + e^r}.
\end{align*}
(When we fix $x_1, \dots, x_{i-1}$, then the function $x_i \mapsto e^{x_i} / (e^{x_1} + \cdots + e^{x_i})$ is increasing and this proves (1).)
\end{proof}

\begin{lemma}
\label{lem:softmaxmvt}
For any $\mathbf{x}, \mathbf{y} \in \mathbb{R}^{c}$, we have
$$
\|\sm(\mathbf{x}) - \sm(\mathbf{y})\|_\infty \leq \frac{1}{2} \|\mathbf{x} - \mathbf{y}\|_\infty
$$
\end{lemma}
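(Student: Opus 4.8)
The plan is to read the bound as a Lipschitz estimate and extract the constant $1/2$ from the operator norm of the softmax Jacobian (consistent with the lemma's name, \texttt{softmaxmvt}). First I would fix a coordinate $i$ and study the scalar function $g_i(t) = \sm(\mathbf{y} + t(\mathbf{x}-\mathbf{y}))_i$ on $[0,1]$. This is smooth, so the ordinary one-variable mean value theorem yields a point $t_i \in (0,1)$ with $\sm(\mathbf{x})_i - \sm(\mathbf{y})_i = g_i'(t_i)$, where $g_i'(t) = \sum_j \frac{\partial \sm}{\partial x_j}(\xi)_i\,(x_j - y_j)$ and $\xi = \mathbf{y} + t_i(\mathbf{x}-\mathbf{y})$ lies on the segment joining $\mathbf{y}$ and $\mathbf{x}$. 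Applying the MVT coordinatewise, rather than invoking a vector-valued version, sidesteps the subtlety that the intermediate point may differ across coordinates; this is harmless since I only need to bound each $|\sm(\mathbf{x})_i - \sm(\mathbf{y})_i|$ separately before taking the maximum.

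Next I would compute the Jacobian explicitly. Writing $p = \sm(\xi)$, direct differentiation gives $\partial p_i / \partial x_j = p_i(\delta_{ij} - p_j)$. The decisive step is to bound the $\ell^1$-norm of the $i$-th row: using $p_i \geq 0$, then $|1 - p_i| = 1 - p_i$ (as $p_i \leq 1$), and $\sum_{j \neq i} p_j = 1 - p_i$, one finds $\sum_j |p_i(\delta_{ij}-p_j)| = p_i \sum_j |\delta_{ij} - p_j| = p_i\big((1-p_i) + (1-p_i)\big) = 2\,p_i(1-p_i)$. Since $p_i(1-p_i) \leq 1/4$ for $p_i \in [0,1]$ (the maximum of the quadratic), this row sum is at most $1/2$, uniformly in $\xi$.

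Combining the two, the $\ell^1$–$\ell^\infty$ pairing gives $|\sm(\mathbf{x})_i - \sm(\mathbf{y})_i| \leq \big(\sum_j |\partial p_i/\partial x_j|\big)\,\|\mathbf{x}-\mathbf{y}\|_\infty \leq \tfrac{1}{2}\|\mathbf{x}-\mathbf{y}\|_\infty$, and taking the maximum over $i$ yields the claim. The only genuine obstacle is the clean evaluation of the row sum, namely spotting the cancellation $\sum_j |\delta_{ij}-p_j| = 2(1-p_i)$ and the quadratic maximum $p_i(1-p_i) \leq 1/4$; everything else is routine bookkeeping. An alternative that avoids the MVT would instead integrate $g_i'$ over $[0,1]$ and pull the same Jacobian bound under the integral, but the MVT route is the shortest.
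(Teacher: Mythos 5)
Your proposal is correct and follows essentially the same route as the paper: a coordinatewise application of the one-variable mean value theorem to $g_i(t)=\sm$ along the segment, an explicit computation of the softmax Jacobian $\partial p_i/\partial x_j = p_i(\delta_{ij}-p_j)$, the $\ell^1$--$\ell^\infty$ bound on each row, and the observation that the row sum equals $2p_i(1-p_i)\le 1/2$. The paper phrases the last step as $2(s_i-s_i^2)=\tfrac12-2(s_i-\tfrac12)^2$, which is the same quadratic maximum you use.
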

\begin{proof}
Let $g_i(t):= \sm((1-t)\mathbf{x} + t\mathbf{y})_i$ for $t\in\mathbb{R}$ and $1\leq i \leq c$.
By the mean value theorem, for each $i$ there exists $t_i \in (0, 1)$ such that
$$
\sm(\mathbf{y})_i - \sm(\mathbf{x})_i = \frac{g_i(1) - g_i(0)}{1 - 0} = g_i'(t_i) = (J_{\sm}(\mathbf{z}_i) (\mathbf{y} - \mathbf{x})^{\T})_i
$$
where $J_{\sm}$ is a Jacobian matrix of softmax and $\mathbf{z}_i:= (1-t_i)\mathbf{x} + t_i \mathbf{y}$.
By direct computation, one can check that the Jacobian is given by
$$
J_{\sm}(\mathbf{z}_i) = \begin{bmatrix} s_1 - s_1^2 & -s_1 s_2 & \cdots & -s_1 s_c \\
-s_2 s_1 & s_2 - s_2^2 & \cdots & -s_2 s_c \\
\vdots & \vdots & \ddots & \vdots \\
-s_c s_1 & -s_c s_2 & \cdots & s_c - s_c^2 
\end{bmatrix}
$$
where $(s_1, \dots, s_c) = \sm(\mathbf{z}_i)$.
Then we have
\begin{align*}
    |\sm(\mathbf{y})_i - \sm(\mathbf{x})_i| &= |\langle (-s_i s_1, -s_i s_2, \dots, s_i -s_i^2, \dots, -s_i s_c), \mathbf{y} - \mathbf{x} \rangle| \\
    & \leq \|(-s_i s_1, -s_i s_2, \dots, s_i - s_i^2, \dots, -s_i s_c)\|_1 \| \mathbf{y} - \mathbf{x}\|_\infty \\
    & = \left(\sum_{j=1}^{c} s_i s_j + (s_i - 2s_i^2)\right) \|\mathbf{y}- \mathbf{x}\|_\infty \\
    & = 2(s_i - s_i^2) \|\mathbf{y} -\mathbf{x}\|_\infty \leq \frac{1}{2} \|\mathbf{y} - \mathbf{x}\|_\infty
\end{align*}
where the last inequality follows from $\sum_{j=1}^{c} s_j = 1$ and $2(s_i - s_i^2) = \frac{1}{2} - 2(s_i - \frac{1}{2})^2$.
This proves the desired inequality.
\end{proof}

\begin{lemma}
\label{lem:softmaxtrunc}
Let $\mathbf{x} \in \mathbb{R}^{c}$ with $x_1 \geq x_2 \geq \cdots \geq x_c$ and $x_1 \geq 0$.
For $1\leq k \leq c$, define $\mathbf{x}_{:k} \in \mathbb{R}^{k}$ as $\mathbf{x}_{:k} = (x_1, \dots, x_k)$.
If $x_{k+1} < -r \leq x_{k}$, for $1 \leq i \leq k$ we have
$$
0 \leq \sm(\mathbf{x}_{:k})_i -  \sm(\mathbf{x})_i \leq \frac{c-k}{i(c-1+e^{r})}.
$$
In particular, we have $0 \leq \sm(\mathbf{x}_{:k})_{i} - \sm(\mathbf{x})_i \leq (c-1) / (c-1 + e^{r})$.
\end{lemma}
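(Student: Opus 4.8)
The plan is to write both softmaxes in terms of partial sums of exponentials and reduce the claim to a clean one-variable estimate. Introduce the partial sum $S_k = \sum_{j=1}^{k} e^{x_j}$, the tail $T = \sum_{j=k+1}^{c} e^{x_j}$, and the full sum $S = S_k + T$. For $1 \leq i \leq k$ we have $\sm(\mathbf{x}_{:k})_i = e^{x_i}/S_k$ and $\sm(\mathbf{x})_i = e^{x_i}/S$, so a direct subtraction gives
$$
\sm(\mathbf{x}_{:k})_i - \sm(\mathbf{x})_i = e^{x_i}\left(\frac{1}{S_k} - \frac{1}{S}\right) = \frac{e^{x_i}}{S_k}\cdot\frac{T}{S}.
$$
Since $T \geq 0$, this quantity is nonnegative, which settles the lower bound for free.

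For the upper bound I would split the right-hand side into the two factors $e^{x_i}/S_k$ and $T/S$ and estimate each separately. The factor $e^{x_i}/S_k$ is controlled using the ordering assumption $x_1 \geq \cdots \geq x_i$: since each of the first $i$ exponentials is at least $e^{x_i}$, we get $i\,e^{x_i} \leq S_k$, hence $e^{x_i}/S_k \leq 1/i$. This is exactly where the $1/i$ factor in the claimed bound originates.

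The remaining and most delicate factor is $T/S = T/(S_k + T)$. Here I would use two one-sided bounds: the hypothesis $x_{k+1} < -r$ together with the ordering gives $e^{x_j} < e^{-r}$ for every $j > k$, so $T < (c-k)e^{-r}$; and $x_1 \geq 0$ together with $x_j \geq x_k \geq -r$ for $j \leq k$ gives $S_k \geq 1 + (k-1)e^{-r}$. The subtlety, and what I expect to be the only genuinely nonroutine step, is that $T$ appears in both numerator and denominator, so its bound cannot be substituted naively. The key observation is that $t \mapsto t/(S_k + t)$ is increasing in $t$, while for fixed $t$ the value is decreasing in $S_k$; substituting the upper bound for $T$ and then the lower bound for $S_k$ yields
$$
\frac{T}{S} \leq \frac{(c-k)e^{-r}}{1 + (k-1)e^{-r} + (c-k)e^{-r}} = \frac{(c-k)e^{-r}}{1 + (c-1)e^{-r}} = \frac{c-k}{c-1+e^r},
$$
where the last equality multiplies numerator and denominator by $e^r$. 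Multiplying the two factor bounds gives $\sm(\mathbf{x}_{:k})_i - \sm(\mathbf{x})_i \leq (c-k)/(i(c-1+e^r))$, as claimed.

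Finally, the ``in particular'' statement follows by monotonicity in the free indices: $i \geq 1$ gives $1/i \leq 1$, and $k \geq 1$ gives $c - k \leq c - 1$, so the bound is at most $(c-1)/(c-1+e^r)$. Everything apart from the monotonicity argument for $T/(S_k+T)$ is routine bookkeeping with the partial sums $S_k$, $T$, and $S$.
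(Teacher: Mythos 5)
Your proof is correct and follows essentially the same route as the paper's: both factor the difference as $\frac{e^{x_i}}{S_k}\cdot\frac{T'}{S_k+T'}$ (with the tail replaced or bounded by $(c-k)e^{-r}$), bound the first factor by $1/i$ via the ordering, and bound the second using $S_k \geq 1+(k-1)e^{-r}$ from $x_1\geq 0$ and $x_j\geq -r$ for $j\leq k$. The only cosmetic differences are that you keep the exact identity and invoke monotonicity of $t\mapsto t/(S_k+t)$ explicitly, and you justify $e^{x_i}/S_k\leq 1/i$ directly from $S_k\geq i\,e^{x_i}$ rather than by comparison with $\sm(x_i,\dots,x_i,-r,\dots,-r)_i$; both are equivalent.
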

\begin{proof}
The left inequality is clear.
For the right one, we have
\begin{align*}
    \sm(\mathbf{x}_{:k})_i - \sm(\mathbf{x})_i &\leq \frac{e^{x_i}}{e^{x_1} + \cdots + e^{x_k}} - \frac{e^{x_i}}{e^{x_1} + \cdots + e^{x_{k}} + (c-k)e^{-r}} \\
    &= \frac{e^{x_i}}{e^{x_1} + \cdots + e^{x_k}} \cdot \frac{(c-k)e^{-r}}{e^{x_1} + \cdots + e^{x_{k}} + (c-k)e^{-r}}.
\end{align*}
The first term can be bounded as
$$
\sm(\mathbf{x}_{:k})_{i} \leq \sm(x_i, x_i, \dots, x_i, -r, \dots, -r)_{i} < \frac{1}{i},
$$
and the second term can be bounded as
$$
\frac{(c-k)e^{-r}}{e^{x_1} + \cdots + e^{x_{k}} + (c-k)e^{-r}} \leq \frac{(c-k)e^{-r}}{1 + (c-1)e^{-r}} = \frac{c-k}{c-1 +e^r}.
$$
(Here we use $x_1 \geq 0$ and $x_2, \dots, x_k \geq -r$.) This proves the inequality.
\end{proof}

From now, we assume that $d(\cdot) \in \mathscr{D}(\delta, r, R, LR)$ for some $\delta, r, R, L$ with $L > 0$ and for $i \geq 1$,
\begin{align*}
    d_0(x)&:= d(x) \\
    d_i(x)&:= L^i d \left(\frac{x}{L^i}\right) \\
    D_i(x)&:= (d_0 \circ d_1 \circ \dots \circ d_{i-1})(x).
\end{align*}
\begin{lemma}
\label{lem:dndervupperbound}
Let $d(\cdot) \in \mathscr{D}(\delta, r, R, LR)$. For all $i \geq 1$ and $x \in [-r, r]$, we have $0 \leq D_i'(x) \leq 1$.
\end{lemma}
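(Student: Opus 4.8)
The plan is to prove the two-sided bound by induction on $i$, after first establishing a self-similar recursion for $D_i$ that reduces each iterate to a single application of $d$. The workhorse is the scaling operator $(S_L f)(x) := L\,f(x/L)$, which is a homomorphism for composition: a one-line check from the definitions gives $S_L(f\circ g) = (S_L f)\circ(S_L g)$. Since $d_j = L^{j} d(\,\cdot\,/L^{j})$ means exactly $d_j = S_L^{\,j} d$, and $S_L d_k = d_{k+1}$, the homomorphism property yields
\[
d_1 \circ d_2 \circ \cdots \circ d_i \;=\; S_L\!\left(d_0 \circ d_1 \circ \cdots \circ d_{i-1}\right) \;=\; S_L D_i .
\]
Composing with $d_0 = d$ on the outside then gives the clean recursion $D_{i+1}(x) = d\bigl(L\,D_i(x/L)\bigr)$, and differentiating by the chain rule produces
\[
D_{i+1}'(x) \;=\; d'\bigl(L\,D_i(x/L)\bigr)\cdot D_i'(x/L).
\]

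With this recursion the induction is almost mechanical. For the base case $i=1$ we have $D_1 = d$, and $0 \le d' \le 1$ on $[-R,R]\supseteq[-r,r]$ is one of the defining properties of the class $\mathscr{D}(\delta, r, R, LR)$. For the inductive step, assume $0 \le D_i' \le 1$ on $[-r,r]$. Because $L \ge 1$, for $x \in [-r,r]$ the point $x/L$ again lies in $[-r,r]$, so the second factor $D_i'(x/L)$ is in $[0,1]$ by the hypothesis. It then suffices to show the first factor $d'\bigl(L\,D_i(x/L)\bigr)$ also lies in $[0,1]$, and the product of two nonnegative factors each at most $1$ is again in $[0,1]$, closing the induction.

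The one genuinely load-bearing step — the main obstacle — is verifying that the argument $L\,D_i(x/L)$ stays inside $[-LR, LR]$, the interval on which the DEF derivative bound $0\le d'\le 1$ is guaranteed; one cannot simply invoke $0\le d'\le 1$ globally. To control this I would use that $D_i$ is odd, since each $d_j$ is odd and a composition of odd functions is odd, so in particular $D_i(0)=0$. The mean value theorem together with the inductive bound $\sup_{[-r,r]}|D_i'|\le 1$ then gives
\[
|D_i(x/L)| \;=\; |D_i(x/L) - D_i(0)| \;\le\; \sup_{[-r,r]}|D_i'|\cdot |x/L| \;\le\; r \;\le\; R ,
\]
so that $L\,D_i(x/L)\in[-LR,LR]$ and hence $0 \le d'\bigl(L\,D_i(x/L)\bigr)\le 1$. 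The noteworthy point is that this boundedness of the iterate is not an extra assumption: it is extracted from the derivative bound itself, routed through oddness and the MVT, which is precisely what makes the induction self-contained. Assembling the two factors yields $0 \le D_{i+1}'(x) \le 1$ on $[-r,r]$, completing the proof.
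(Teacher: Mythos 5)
Your proof is correct, but it factors the composition from the opposite end compared to the paper. The paper uses the decomposition that falls directly out of the definition, $D_i = D_{i-1}\circ d_{i-1}$, so the chain rule gives $D_i'(x) = D_{i-1}'(d_{i-1}(x))\cdot d'(x/L^{i-1})$; the induction then needs two domain checks, namely $x/L^{i-1}\in[-r,r]$ (immediate since $L\ge 1$) and $d_{i-1}(x)\in[-r,r]$ (asserted without proof in the paper, though it follows from $|d(y)|\le|y|$). You instead peel off the outermost factor via the scaling homomorphism $S_L(f\circ g)=(S_Lf)\circ(S_Lg)$ to get $D_{i+1}(x)=d\bigl(L\,D_i(x/L)\bigr)$, which shifts the burden to showing $L\,D_i(x/L)$ stays inside the interval where $0\le d'\le 1$ is guaranteed; you discharge this explicitly with oddness of $D_i$ and the mean value theorem. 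The paper's route is shorter because its recursion is definitional, whereas yours costs an extra algebraic identity up front; in exchange, your version makes the domain-control step --- the point the paper glosses over --- fully self-contained, and it isolates the same oddness assumption that the authors later acknowledge (in their post-publication errata) as having been used implicitly. Both arguments ultimately rest on the same ingredients: induction, the chain rule, and the DEF properties $0\le d'\le 1$ and $d(0)=0$.
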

\begin{proof}
Use induction on $i$. It is true for $i = 1$ by definition ($D_1(x) = d(x)$).
Assume that we have $0 \leq D_{i-1}(x) \leq 1$ for $x \in [-r, r]$.
Since $D_i(x) = D_{i-1}(d_{i-1}(x))$, we have
$$
    D_i'(x) = D_{i-1}'(d_{i-1}(x)) \cdot d_{i-1}'(x) = D_{i-1}'(d_{i-1}(x)) \cdot d'\left(\frac{x}{L^{i-1}}\right).
$$
Now combining the induction hypothesis with $d_{i-1}(x) \in [-r, r]$ and $x / L^{i-1} \in [-r, r]$ results that the above expression lies between 0 and 1.
\end{proof}

The following lemma gives a lower bound of $D_n(x)$ for $x \in [-r, r]$ that does not depend on $n$.

\begin{lemma}
\label{lem:dnlowerbound}
For all $n\geq 1$ and $x \in [0, r]$, we have
$$
D_n(x) \geq x - \frac{\delta L^2}{L^2 - 1} x^3.
$$
\end{lemma}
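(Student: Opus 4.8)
The plan is to prove the bound by induction on $n$, exploiting the self-similar structure of the iterated composition. The key observation is that, directly from the definition $D_n = d_0 \circ d_1 \circ \cdots \circ d_{n-1}$, I may peel off the innermost factor to obtain the recursion $D_n = D_{n-1} \circ d_{n-1}$, i.e.\ $D_n(x) = D_{n-1}(d_{n-1}(x))$. Since $d_{n-1}(x) = L^{n-1} d(x/L^{n-1})$ is merely a rescaling of $d$, I can transfer the near-identity behaviour of $d$ to $d_{n-1}$: for $x \in [0,r]$ we have $x/L^{n-1} \in [0,r]$ (as $L > 1$), so the near-identity estimate $x - \delta x^3 \le d(x) \le x$ guaranteed by membership in $\mathscr{D}(\delta, r, R, LR)$ scales to
$$
x - \delta L^{-2(n-1)} x^3 \;\le\; d_{n-1}(x) \;\le\; x, \qquad d_{n-1}(x) \ge 0.
$$
(The upper bound $d(x) \le x$ on $[0,r]$ can alternatively be read off from Lemma~\ref{lem:dndervupperbound} with $i = 1$, since $d(0)=0$ and $D_1' = d' \le 1$.)

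For the inductive step I would set $y := d_{n-1}(x)$, which lies in $[0,r]$ by the display above, so the induction hypothesis $D_{n-1}(y) \ge y - c_{n-1} y^3$ applies, where I carry the sharper inductive constant $c_{n-1} = \delta \sum_{k=0}^{n-2} L^{-2k}$. The crucial bookkeeping is to use the two bounds on $y$ in opposite directions: the lower bound $y \ge x - \delta L^{-2(n-1)} x^3$ controls the linear term, while the upper bound $0 \le y \le x$ gives $y^3 \le x^3$ and hence controls the cubic term. Combining,
$$
D_n(x) = D_{n-1}(y) \ge y - c_{n-1} y^3 \ge \big(x - \delta L^{-2(n-1)} x^3\big) - c_{n-1} x^3,
$$
so that $D_n(x) \ge x - c_n x^3$ with $c_n = c_{n-1} + \delta L^{-2(n-1)}$ and base case $c_1 = \delta$ coming from $D_1 = d$.

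Finally I would solve the recursion explicitly as a geometric sum, $c_n = \delta \sum_{k=0}^{n-1} L^{-2k} = \delta \frac{1 - L^{-2n}}{1 - L^{-2}} \le \frac{\delta}{1 - L^{-2}} = \frac{\delta L^2}{L^2 - 1}$; since $x^3 \ge 0$, this monotone-in-$n$ constant is dominated by its limit, yielding the stated uniform (in $n$) bound $D_n(x) \ge x - \frac{\delta L^2}{L^2-1} x^3$. The main thing to get right is not any single estimate but the direction-tracking in the inductive step — applying the lower bound on $d_{n-1}(x)$ to the linear part and the upper bound to the cubic part — together with the recognition that the per-level errors $\delta L^{-2(n-1)}$ telescope into exactly the geometric series whose sum is $\frac{L^2}{L^2-1}$; this is precisely what produces the clean closed-form constant and makes the bound independent of $n$.
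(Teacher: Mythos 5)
Your proof is correct, and it arrives at the same geometric series $\delta\sum_{k=0}^{n-1}L^{-2k}x^{3}$ as the paper, but by a somewhat different mechanism. The paper telescopes $D_n(x)=x-\sum_{i=1}^{n}\bigl(D_{i-1}(x)-D_i(x)\bigr)$ (with $D_0=\mathrm{id}$) and bounds each difference $D_{i-1}(x)-D_{i-1}(d_{i-1}(x))$ by the mean value theorem combined with the derivative bound $0\le D_{i-1}'\le 1$ from Lemma~\ref{lem:dndervupperbound}, so that the per-level error $x-d_{i-1}(x)\le \delta L^{-2(i-1)}x^{3}$ passes through the outer composition without amplification. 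You instead run an inside-out induction on $D_n=D_{n-1}\circ d_{n-1}$, carrying the explicit constant $c_{n-1}=\delta\sum_{k=0}^{n-2}L^{-2k}$, and propagate the error via the elementary observation that $0\le y\le x$ implies $y^{3}\le x^{3}$, applying the lower bound on $y=d_{n-1}(x)$ to the linear term and the upper bound to the cubic term. This sidesteps the mean value theorem and uses Lemma~\ref{lem:dndervupperbound} only in the trivial one-step form $0\le d(t)\le t$ on $[0,r]$ (needed, together with $d(t)\ge 0$, to ensure $y\in[0,r]$ so the induction hypothesis applies), so it is marginally more self-contained; the paper's version is shorter given that Lemma~\ref{lem:dndervupperbound} is already established for other purposes. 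Your explicit direction-tracking is exactly the care required, since $t\mapsto t-c_{n-1}t^{3}$ need not be monotone on $[0,r]$ and one cannot simply substitute the lower bound for $y$ throughout. Both routes then close with the same geometric-sum bound $c_n\le \delta L^{2}/(L^{2}-1)$, uniform in $n$.
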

\begin{proof}
By mean value theorem, for each $i \geq 2$, we have
\begin{align*}
    D_{i-1}(x) - D_{i}(x) &= D_{i-1}(x) - D_{i-1}(d_{i-1}(x)) \\
    &= D_{i-1}'(c) (x - d_{i-1}(x))\qquad\qquad (\text{for some }d_{i-1}(x) < c < x) \\
    &\leq x - d_{i-1}(x) \\
    &= L^{i-1}\left( \frac{x}{L^{i-1}} - d\left(\frac{x}{L^{i-1}}\right)\right) \\
    &\leq L^{i-1} \cdot \delta \left(\frac{x}{L^{i-1}}\right)^{3} = \frac{\delta}{L^{2(i-1)}}
\end{align*}
where the first inequality follows from Lemma \ref{lem:dndervupperbound}.
Hence we get
\begin{align*}
    D_n(x) &= x - (x - D_1(x)) - (D_1(x) - D_2(x)) - \cdots - (D_{n-1}(x) - D_n(x)) \\
    &\geq x - \left(\delta x^3 + \frac{\delta}{L^2}x^3 + \cdots + \frac{\delta}{L^{2(n-1)}}x^3\right) \\
    &= x - \delta \frac{1 - L^{-2n}}{1 - L^{-2}} x^3 \\
    &\geq x - \frac{\delta L^2}{L^2 - 1}x^3.
\end{align*}
\end{proof}

Now we prove our main theorem.

\begin{theorem}
\label{thm:softmaxbound}
Let $d(\cdot) \in \mathscr{D}(\delta, r, R, LR)$ be a DEP and abuse a notation so that $d: \mathbb{R}^{c} \to \mathbb{R}^{c}$ is a function that applies $d : \mathbb{R} \to \mathbb{R}$  component-wise.
Let $p: [-R, R]^c \to [-R, R]^c$ be a polynomial approximation of $\sm$ with minimax error
$\|\sm(\mathbf{x}) - p(\mathbf{x})\|_{\infty} < \epsilon$.
Let $\mathsf{Amax}: [-L^n R, L^n R]^c \to [-L^n R, L^n R]^{c}$ be a given polynomial appproximation of max function satisfying $\mathsf{Amax}(\mathbf{x}) \leq \mathsf{max}(\mathbf{x})$ and $\|\mathsf{Amax}(\mathbf{x}) - \mathsf{max}(\mathbf{x})\|_{\infty}   \leq r$.
Define a normalized vector $\mathbf{x}' = \mathrm{Norm}(\mathbf{x}) = (x_1', \dots, x_c') \in \mathbb{R}^{c}$ as $x'_j =  x_j - m$ for $1\leq j \leq c$, where $m = \mathsf{Amax}(\mathbf{x})$.
For $p_n := p \circ D_n$ with $D_n := d_0 \circ \cdots \circ d_{n-1}$ and $d_i(x) := L^{i} d(x/L^i)$, if $\mathbf{x} \in [-\frac{1}{2}L^n R, \frac{1}{2}L^n R]^c$, we have
$$
\|\sm(\mathbf{x}) - p_n(\mathbf{x}')\|_\infty <  \beta(c,\delta, r, L, d) + \epsilon
$$
where
$$
\beta(c, \delta, r,L, d):= \frac{1}{1 + \frac{e^r}{c-1}} + \frac{1}{1 + \frac{e^{r - \delta L^2 r^3 / (L^2 - 1)}}{c-1}} + \frac{\delta r^3 L^2}{2 (L^2 - 1)}.
$$
\end{theorem}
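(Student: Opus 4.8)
The plan is to use the shift-invariance of softmax to reduce everything to a comparison between $\sm$ on the normalized vector and $\sm$ after the domain-extension map $D_n$, and then to split the total error into a part coming from the polynomial approximation $p$ (contributing $\epsilon$) and a part coming from the distortion introduced by $D_n$ together with the normalization (contributing $\beta$). First I would observe that $\sm(\mathbf{x}) = \sm(\mathbf{x}')$ for $\mathbf{x}' = \mathrm{Norm}(\mathbf{x})$, since subtracting the common constant $m = \mathsf{Amax}(\mathbf{x})$ from every coordinate leaves softmax unchanged. The triangle inequality then gives
\[
\|\sm(\mathbf{x}) - p(D_n(\mathbf{x}'))\|_\infty \le \|\sm(\mathbf{x}') - \sm(D_n(\mathbf{x}'))\|_\infty + \|\sm(D_n(\mathbf{x}')) - p(D_n(\mathbf{x}'))\|_\infty .
\]
For the second summand I would first check $D_n(\mathbf{x}') \in [-R,R]^c$: the hypothesis $\mathbf{x} \in [-\tfrac12 L^n R, \tfrac12 L^n R]^c$ together with $m = \mathsf{Amax}(\mathbf{x}) \le \max(\mathbf{x}) \le \tfrac12 L^n R$ forces $\mathbf{x}' \in [-L^n R, L^n R]^c$, which is exactly the domain on which the $n$-fold DEP $D_n$ clips into $[-R,R]$; hence the minimax bound on $p$ makes this term $<\epsilon$. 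The whole task is therefore to bound the first summand by $\beta$.

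For the first summand I would sort so that $x_1' \ge \cdots \ge x_c'$ (legitimate because softmax and the coordinate-wise $D_n$ both commute with permutations and $\|\cdot\|_\infty$ is permutation-invariant) and let $k$ be the number of coordinates with $x_j' \ge -r$. The normalization facts I would record are $\max_j x_j' = \max(\mathbf{x}) - m \in [0,r]$ (from $\|\mathsf{Amax} - \max\|_\infty \le r$) and that $D_n$ is odd, increasing, and satisfies $x - D_n(x) \le \frac{\delta L^2}{L^2-1}x^3 \le \frac{\delta L^2 r^3}{L^2-1}$ on $[0,r]$ by Lemma~\ref{lem:dnlowerbound} (with $0 \le D_n' \le 1$ from Lemma~\ref{lem:dndervupperbound}). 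I would then bound $|\sm(\mathbf{x}')_i - \sm(D_n(\mathbf{x}'))_i|$ coordinate-wise in two cases. For a top coordinate $i \le k$, insert the $k$-truncations and estimate three differences: $\sm(\mathbf{x}')$ against $\sm(\mathbf{x}'_{:k})$, bounded by $\frac{c-1}{c-1+e^r}$ (term 1) via Lemma~\ref{lem:softmaxtrunc}; then $\sm(\mathbf{x}'_{:k})$ against $\sm(D_n(\mathbf{x}'_{:k}))$, where the two $k$-vectors differ by at most $\frac{\delta L^2 r^3}{L^2-1}$ coordinatewise so the Lipschitz estimate of Lemma~\ref{lem:softmaxmvt} contributes $\frac{\delta L^2 r^3}{2(L^2-1)}$ (term 3); and finally $\sm(D_n(\mathbf{x}')_{:k})$ against $\sm(D_n(\mathbf{x}'))$, which is again a truncation error but now with the shrunk threshold $r' := r - \frac{\delta L^2 r^3}{L^2-1}$, giving $\frac{c-1}{c-1+e^{r'}}$ (term 2). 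Their sum is exactly $\beta$.

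The delicate point is justifying that Lemma~\ref{lem:softmaxtrunc} really applies to $D_n(\mathbf{x}')$ with threshold $r'$. Since $D_n$ is increasing it preserves the sorted order, so the top-$k$ block of $D_n(\mathbf{x}')$ equals $D_n(\mathbf{x}'_{:k})$; since $D_n$ is odd with $D_n(r) \ge r'$ I obtain $D_n(-r) = -D_n(r) \le -r'$, whence $D_n(x_{k+1}') < -r'$ for the first dropped coordinate, while $D_n(x_k') \ge D_n(-r) \ge -r'$ and $\max_j D_n(x_j') = D_n(x_1') \ge D_n(0) = 0$, so all hypotheses hold with $r$ replaced by $r'$. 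For a bottom coordinate $i > k$, both $\sm(\mathbf{x}')_i$ and $\sm(D_n(\mathbf{x}'))_i$ are positive and, by Lemma~\ref{lem:softmaxlarge} applied to $\mathbf{x}'$ (coordinate $\le -r$) and to $D_n(\mathbf{x}')$ (coordinate $< -r'$), are each at most $\frac{1}{1+e^{r'}} \le \beta$, so their difference is $\le \beta$ as well. Taking the maximum over coordinates yields $\|\sm(\mathbf{x}') - \sm(D_n(\mathbf{x}'))\|_\infty \le \beta$, and combining with the $\epsilon$ bound completes the proof. I expect the main obstacle to be precisely this threshold bookkeeping in the top-coordinate case: propagating the order structure and the shifted threshold $r'$ through $D_n$ via oddness and monotonicity, and confirming that the three contributions assemble into exactly the stated constant $\beta$ rather than something merely comparable to it.
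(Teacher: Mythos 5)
Your proposal follows the paper's own proof essentially step for step: the same use of shift-invariance to replace $\mathbf{x}$ by $\mathbf{x}'$, the same two-term triangle inequality isolating the $\epsilon$ contribution, the same case split on whether $x_i'\geq -r$, and the same three-term decomposition through the truncated vector $\mathbf{x}'_{:k}$ controlled by Lemmas \ref{lem:softmaxtrunc}, \ref{lem:softmaxmvt} and \ref{lem:dnlowerbound}. You are in fact more explicit than the paper on two points it glosses over: the verification that $\mathbf{x}'$ lands in the domain $[-L^nR, L^nR]^c$ of $D_n$, and the use of oddness and monotonicity of $D_n$ to transport the sorted order and the threshold through $D_n$ (the published proof uses oddness implicitly, as the authors' erratum acknowledges).

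One step of your threshold bookkeeping is written backwards, however. You assert $D_n(x_k')\geq D_n(-r)=-D_n(r)\geq -r'$, but Lemma \ref{lem:dnlowerbound} gives $D_n(r)\geq r'$, hence $-D_n(r)\leq -r'$; so the retained coordinates of $D_n(\mathbf{x}')$ need not lie above $-r'$, and Lemma \ref{lem:softmaxtrunc} cannot be invoked with threshold $r'$ (its proof needs the kept coordinates to be $\geq -\rho$, not only the discarded ones to be $<-\rho$). The repair is immediate and is exactly what the paper does: apply Lemma \ref{lem:softmaxtrunc} to $D_n(\mathbf{x}')$ with threshold $\rho=D_n(r)$ — monotonicity gives $D_n(x_j')\geq -D_n(r)$ for $j\leq k$ and $D_n(x_j')<-D_n(r)$ for $j>k$ — obtaining the bound $(c-1)/(c-1+e^{D_n(r)})$, and only afterwards use $D_n(r)\geq r'$ to relax this to $(c-1)/(c-1+e^{r'})$, the second term of $\beta$. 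With that one-line fix your argument coincides with the paper's.
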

Note that the upper bound does not depend on $n$.
\begin{proof}
One can assume that $x_1 \geq x_2 \geq \cdots \geq x_c$.
By the assumption on $\mathsf{ApproxMax}$, $x_{c} \leq m \leq x_{1}$ and $r  \geq x_{1}' = x_1 - m \geq 0$.
In other words, $\mathbf{x}' \in [-L^n R, r]^{c}$.
Since $\sm(\mathbf{x}') = \sm(\mathbf{x})$, we have.
We have
\begin{align*}
    \|\sm(\mathbf{x}) - p_n(\mathbf{x'})\|_\infty 
 &= \|\sm(\mathbf{x}') - p_n(\mathbf{x}')\|_\infty \\
 &\leq \|  \sm(\mathbf{x}') - \sm\circ D_n(\mathbf{x}')\|_\infty + \|\sm\circ D_n(\mathbf{x}') - p\circ D_n (\mathbf{x}')\|_\infty
\end{align*}

For $1 \leq i \leq c$, if $x_i' \leq -r $, then by Lemma \ref{lem:softmaxlarge}, we have
$$
0 < \sm(\mathbf{x}')_i \leq \frac{1}{1 +e^r}, \quad 0 < \sm(D_n(\mathbf{x}'))_i \leq \frac{1}{1 + e^{D_n(r)}}.
$$
(Second inequality follows from $D_n(x_1') \geq 0$ and $D_n(x_i') \leq -D_n(r)$.)
If $x_{i}' \geq -r$, let $k \in \{1, \dots, c\}$ be a minimal number such that $x_{k}' \geq -r$, so that $x_{c}' \leq \cdots \leq x_{k+1}' < -r \leq x_k' \leq \cdots \leq x_1'$.
Let $\mathbf{x}_{:k}':= (x_1', \dots, x_k') \in \mathbb{R}^{k}$.
Then 
\begin{align*}
    |\sm(\mathbf{x}')_i - \sm(D_n(\mathbf{x}'))_i| &\leq |\sm(\mathbf{x}')_i - \sm(\mathbf{x}_{:k}')_i| \\
    &+ |\sm(\mathbf{x}_{:k}')_i - \sm(D_n(\mathbf{x}_{:k}'))_i|\\
    &+ |\sm(D_n(\mathbf{x}_{:k}'))_i - \sm(D_n(\mathbf{x}'))_i|.
\end{align*}
By Lemma \ref{lem:softmaxtrunc}, the first and third terms are bounded above by $(c-1) / (c -1 + e^r)$ and $(c-1)/(c-1+ e^{D_n(r)})$, respectively.
The second term can be bounded using Lemma \ref{lem:softmaxmvt} and Theorem 1 of \cite{cheon2022efficient} as
\begin{equation*}
    |\sm(\mathbf{x}'_{:k})_i - \sm(D_n(\mathbf{x}'_{:k}))_i| \leq \frac{1}{2}\|\mathbf{x}'_{:k} - D_n(\mathbf{x}'_{:k})\|_{\infty} \leq \frac{\delta r^3 L^2}{2(L^2 - 1)}.
\end{equation*}
Hence we get
$$
|\sm(\mathbf{x}')_i - \sm(D_n(\mathbf{x}'))_i| \leq \beta(c, \delta,  r, L, d).
$$
Since $1 / (1 + e^{r}) \leq 1 / (1 + e^{r} / (c-1)) < \beta(c, \delta, r, L, d)$ and $1 / (1 + e^{D_n(r)}) \leq 1 / (1 + e^{D_n(r)} / (c-1)) \leq 1 / ({1 + e^{r - \delta L^2 r^3 / (L^2 - 1)}} / (c-1))  < \beta(c, \delta, r, L, d)$, we get

$$
\|\sm(\mathbf{x}') - \sm(D_n(\mathbf{x}'))\|_\infty \leq \beta(c, \delta, r, L, d).
$$

For the other term, since $D_n(\mathbf{x}') \in [-R, R]^c$, we have $\|\sm(D_n(\mathbf{x}')) - p(D_n(\mathbf{x}'))\|_{\infty, [-L^n R, L^n R]^c} < \epsilon$.
By combining these, we get the inequality.
\end{proof}

\subsection{Algorithm}

Based on our softmax approximation, Algorithm \ref{alg:softmax} computes the row-wise softmax of a matrix $M \in \mathbb{R}^{a \times c}$, which gives $\mathbf{P} \in \mathbb{R}^{a \times c}$ when given $M = \mathbf{X}\mathbf{W}^{\T}$, the probability matrix for each input in a minibatch.
Here $\mathsf{AExp}$ and $\mathsf{AInv}$ are approximated exponential and inverse functions with the algorithms in \cite{lee2022privacy}, with our modified parameters.
Also $\enc{M}$ stands for the encoding of a matrix $M$ with submatrices of size $s_0 \times s_1$ - see Appendix B for the details.
Note that the $\mathbf{X}\mathbf{W}^{\T}$ is tiled horizontally (assuming that the matrix $\mathbf{W}$ is zero-padded and tiled vertically), and we also want that the result of softmax is tiled horizontally, to apply our matrix multiplication algorithm.
Line 1--19 corresponds to the normalization of input that computes a row-wise max and subtracts it from the input.
The loop in line 21--27 is the domain extension step, where line 24--27 yields a high accuracy.
After computing the softmax (line 28--32), line 33--35 make the result matrix tiled horizontally.
For the comparison function, we followed the algorithm in \cite{cheon2020efficient}.
More precisely, we approximate $x \mapsto (\mathsf{sgn}(x) + 1) / 2$ on $[-1, 1]$ as a polynomial $f(g(g(x)))$ where
$$
f(x) = - \frac{5}{16}\left( x^7 - \frac{21}{5}x^5 + 7x^3 - 7x\right), \quad g(x) = -\frac{12860}{1024}\left(x^7 - \frac{25614}{12860}x^5 + \frac{16577}{12860}x^3 - \frac{4589}{12860}x\right),
$$
and get an approximation for $\mathsf{comp}(a, b) := (\mathsf{sgn}(a - b) + 1) / 2$ with $-1/2 \leq a, b \leq 1/2$.
Note that increasing the number of compositions of $f$ and $g$ gives a better approximation of the comparison.
However, we have experimentally found that $f(g(g(x))$ is enough for our experiments.

\begin{algorithm}
\caption{Row-wise softmax approximation}
\label{alg:softmax}
 \textbf{Input}: $\enc{\underline{M}}$, for $M \in \mathbb{R}^{a \times c}$, $c \leq s_{1}$, $c' = 2^{\lceil \log_{2}c\rceil}$, $R_{\mathsf{orig}}$ (original approximation range), $L$ (domain extension ratio), $n$ (domain extension index), $\mathsf{Acomp}(\cdot,\cdot)$ (homomorphic approximate comparison function), precise (boolean). \\
 \textbf{Output}: $\enc{\underline{\mathsf{ASoftmax}(M)}}$
\begin{algorithmic}[1]
    \STATE $R_{\mathsf{max}} = \lceil R_{\mathsf{orig}} \cdot L^{n} \rceil$
    \STATE $D_{c} = (d_{ij})$ where 
    $$
        d_{ij} = \begin{cases}
            1 & 0 \leq j < c \\
            0 & \text{otherwise}
        \end{cases} 
    $$
    \STATE $\enc{M'} = \enc{\underline{M}} \times \left(\frac{1}{2R_{\mathsf{max}}}\right)$
    \IF{$c \neq c'$}
        \STATE $D_{\mathsf{padmask}} = (m_{ij})$ where
        $$
            m_{ij} = \begin{cases}
                0 & 0 \leq j < c \\
                1/2 & \text{otherwise}
            \end{cases}
        $$
        \STATE $\enc{M'} = \enc{M'} - \enc{D_{\mathsf{padmask}}}$
    \ENDIF
    \STATE $D_{\mathsf{firstcol}} = (m_{ij}')$ where
    $$
        m_{ij}' = \begin{cases}
            1 & j=0 \\
            0 & \text{otherwise}
        \end{cases}
    $$
    \FOR{$j=0$ to $\log_{2}(c')$}
        \STATE $\enc{M_{\mathsf{rot}}} = \mathsf{Lrot}(\enc{M_{\mathsf{max}}}, 2^j)$
        \STATE $\enc{M_{\mathsf{comp}}} = \mathsf{Acomp}(\enc{M_{\mathsf{max}}}, \enc{M_{\mathsf{rot}}})$
        \STATE $\enc{M_{\mathsf{max}}} = \enc{M_{\mathsf{max}}} \odot \enc{M_{\mathsf{comp}}} + \enc{M_{\mathsf{rot}}} \odot (1 - \enc{M_{\mathsf{comp}}})$
    \ENDFOR
    \STATE $\enc{M_{\mathsf{max}}} = \enc{M_{\mathsf{max}}} \cdot \enc{D_{\mathsf{firstcol}}}$
    \FOR{$j=0$ to $\log_{2}(s_{1})$}
        \STATE $\enc{M_{\mathsf{max}}} = \enc{M_{\mathsf{max}}} + \mathsf{Rrot}(\enc{M_{\mathsf{max}}}, 2^i \cdot s_1)$
    \ENDFOR
    \STATE $\enc{M_{\mathsf{max}}} = \enc{M_{\mathsf{max}}} \times (2R_{\mathsf{max}})$
    
    \STATE $\enc{{M_{\mathsf{norm}}}} = (\enc{\underline{M}} - \enc{{M_{\mathsf{max}}}})\odot \enc{D_c}$
    \STATE $B(x):= x - \frac{4x^{3}}{27R_{\mathsf{orig}}^{2}}$
    \FOR{$i=n-1$ to 0}
        \STATE $\enc{{M_{\mathsf{norm}}}} = L^{i} \odot B(\enc{{M_{\mathsf{norm}}}} \odot L^{-i})$
    \ENDFOR
    \IF{precise}
        \STATE $B_{\mathsf{inv}}(x):= x - \frac{4}{27} \frac{L^{2}(L^{2n} - 1)}{L^{2n}(L^2 - 1)} \left(\frac{x^3}{R_{\mathsf{orig}}^2} - \frac{x^5}{R_{\mathsf{orig}}^4}\right)$
        \STATE $\enc{M_{\mathsf{norm}}} = B_{\mathsf{inv}}(\enc{M_{\mathsf{norm}}})$
    \ENDIF
    \\
    \STATE $M_{\mathsf{exp}} = \mathsf{AExp}(\enc{{M_{\mathsf{norm}}}})$
    \STATE $M_{\mathsf{exp}} = M_{\mathsf{exp}} \odot \enc{D_{c}}$
    \STATE $M_{\mathsf{expsum}} = \RS(M_{\mathsf{exp}})$
    \STATE $M_{\mathsf{Z}} = \mathsf{AInv}(M_{\mathsf{expsum}})$
    \STATE $M_{\sm} = M_{\mathsf{expsum}} \odot M_{\mathsf{Z}}$
    \FOR{$j=0$ to $\log_{2}(s_1 / c')$} 
        \STATE $M_{\sm} = M_{\sm} + \rrot(M_{\sm}, 2^{j}\cdot s_{1} \cdot c')$
    \ENDFOR
    \\
    \STATE $\enc{\underline{\mathsf{ASoftmax}(M)}} = M_{\sm}$
\end{algorithmic}
\end{algorithm}

\subsection{Comparison with previous approaches}
\label{appendix:softmaxcomparison}

The following Table \ref{tab:softmaxerr} shows the maximum and average errors of the softmax approximations including  \cite{lee2022privacy,hong2022secure,jin2020secure} and ours, for each input dimension $c$ and range $R$.
Since it is computationally intractable to find the exact maximum error of functions in several variables, we randomly sample points on each domain of approximation instead and report its maximum.
More precisely, according to the value of $R$, we sample as
\begin{itemize}
    \item $R=4$: sample 100M points uniformly on $[-4, 4]^{c}$,
    \item $R=8$: sample 100M points on $[-4, 4]^{c}$ and $[-8, 8]^{c}$ uniformly, total 200M points,
    \item $R=32$: sample 100M points on $[-4, 4]^{c}$, $[-8, 8]^{c}$, and $[-32, 32]^{c}$ uniformly, total 300M points.
    \item $R=128$: sample 100M points on $[-4, 4]^{c}$, $[-8, 8]^{c}$, $[-32, 32]^c$, and $[-128, 128]^{c}$ uniformly, total 400M points.
\end{itemize}
(we sample the points in such an accumulative way since uniformly randomly sampled points become more \emph{sparse} as $R$ increases, so we additionally sample points on smaller intervals to consider possible large error on small intervals, too.)
We can see that our approximation could cover the widest range with smallest error.
When the error is too large (e.g. Goldschmidt's algorithm fail to converge due to large input value), we filled up the corresponding entry with -.

For the comparison, we use the following parameters.
\begin{itemize}
    \item \cite{lee2022privacy}: We use the same parameters given in the paper. In particular, we use degree 12 $L^2$-approximation of the exponential function on $[-1, 1]$ with $B=64$, $R=10000$ and $n=8$ for inverse approximation, and Gumbel softmax function is used with $\lambda=4$.
    \item \cite{hong2022secure}: We use the same parameters given in the paper.
    In particular, we use $(r, L) = (4, 32)$ for $\mathsf{AExp}_{r, L}$ and $M=80$ and $d=30$ for inverse (which are the same as $R=80$ and $d=30$ if we use the notations from \cite{lee2022privacy}).
    \item \cite{jin2020secure}: They used degree 3 $L^2$-approximation of sigmoid on $[-8, 8]$  (that is $y = 0.5+ 0.15012x - 0.00159x^3$, used in \cite{kim2018secure}) which has a minimax error of 0.0098 on $[-8, 8]$.
    So we have observed the resulting one-vs-each softmax approximation error on $[-4, 4]^c$. (Since we take differences between inputs for each class, the actual input of each sigmoid could fit into $[-8, 8]$ when the input themselves are in $[-4, 4]$.)
    \item Ours: Our initial softmax approximation is based on \cite{lee2022privacy}, but with different parameters.
    We use $B = 8$ for exponential (approximation on $[-8, 8]$) and $R = 100$ and $n = 16$ for inverse. 
    Since normalization subtracts (approximate) maximum value from inputs, the possible range of the resulting normalized input becomes twice; hence our softmax can only cover $[-4, 4]^c$ without domain extension.
    For domain extension, we set the domain extension ratio as $L = 2$ and domain extension index as $5$, so that the new domain of approximation becomes $[-128, 128]^{c}$.
    Also, we can increase precision by applying Algorithm 2 of \cite{cheon2020efficient} which applies an additional degree 5 polynomial which approximates the inverse of DEP.
\end{itemize}

\begin{table}[ht]
\centering
\resizebox{\columnwidth}{!}{%
\setlength\extrarowheight{2pt}
\begin{tabular}{c|c|c|c|c|c|c|c|c|c|c|c|c|c}
\toprule
$c$ & $R$ & \multicolumn{2}{c|}{\cite{lee2022privacy}} & \multicolumn{2}{c|}{\cite{hong2022secure}} & \multicolumn{2}{c|}{\cite{jin2020secure}} &  \multicolumn{2}{c|}{Ours (norm)} & \multicolumn{2}{c|}{Ours (norm+extn)} & \multicolumn{2}{c}{Ours (norm+extn+prec)} \\
\midrule
\multirow{4}{*}{3} & 4 & 0.9243 & 0.6957 & 0.0755 & 0.0162 & 0.1841 & 0.0910 & 3.7e-7 & 4.3e-8 & 0.0037 & 0.0015 & 0.0013 & 0.0003 \\
& 8 & 0.9651 & 0.7131 & 0.6041 & 0.0199 & - & - & - & - & 0.0037 & 0.0015 & 0.0013 & 0.0004 \\
& 32 & 0.9997 & 0.5812 & - & -& - &- & - & - & 0.0037 & 0.0015 & 0.0022 & 0.0006 \\
& 128 & - & - & - & -& - &- & - & - & 0.0037 & 0.0014 & 0.0022 & 0.0006 \\
\midrule
\multirow{4}{*}{5} & 4 & 0.9138 & 0.5784 & 0.0965 & 0.0239 & 0.3093 & 0.1148 & 7.2e-7 & 7.4e-8 & 0.0071 & 0.0029 & 0.0026 & 0.0004 \\
& 8 & 0.9591 & 0.5320 & 0.4121 & 0.0313 & - & - & - & - & 0.0071 & 0.0029 & 0.0026 & 0.0005 \\
& 32 & 0.9996 & 0.4806 & - & - & - & - & - & - & 0.0071 & 0.0024 & 0.0044 & 0.0008 \\
& 128 & - & - & - & - & - & - & - & - & 0.0116 & 0.0023 & 0.0044 & 0.0010 \\
\midrule
\multirow{4}{*}{7} & 4 & 0.9051 & 0.4926 & 0.1030 & 0.0297 & 0.4930 & 0.1296 & 1.0e-6 & 9.1e-8 & 0.0065 & 0.0031 & 0.0026 & 0.0003\\
& 8 & 0.9522 & 0.5320 & 0.2095 & 0.0416 & - & - & - & - & 0.0073 & 0.0029 & 0.0030 & 0.0005 \\
& 32 & 0.9992 & 0.4252 & - & - & - & - & - & - & 0.0087 & 0.0029 & 0.0065 & 0.0010  \\
& 128 & - & - & - & - & - & - & - & - & 0.0089 & 0.0029 & 0.0066 & 0.0013 \\
\midrule
\multirow{4}{*}{10} & 4 & 0.8942 & 0.3992 & 0.0948 & 0.0339 & 0.8018 & 0.1501 & 1.4e-6 & 1.0e-7 & 0.0153 & 0.0046 & 0.0040 & 0.0006 \\
& 8 & 0.9438 & 0.4476 & 0.2167 & 0.0516 & - & - & - & - & 0.0153 & 0.0042 & 0.0050 & 0.0014 \\
& 32 & 0.9985 & 0.3768 & - & - & - & - & - & - & 0.0153 & 0.0039 & 0.0094 & 0.0012 \\
& 128 & - & - & - & - & - & - & - & - & 0.0224 & 0.0039 & 0.0097 & 0.0016  \\
\bottomrule
\end{tabular}
}
\caption{Maximum and average errors of softmax approximation with 100--300M sampled points. Ours (norm+extn+prec) represents our approach combined with Algorithm 2 of \cite{cheon2022efficient}.}
\label{tab:softmaxerr}
\end{table}

The errors of previous works are fairly large, considering that softmax values lie between 0 and 1, and one can ask if it is possible to use these approximations in practice. The authors of \cite{lee-etal-2022-privacy}. (resp. \cite{hong2022secure}) proposed a softmax approximation and showed it is useful for ResNet-20 inference (resp. shallow neural network), where it is sufficient to identify the largest of many values rather than to calculate the exact values. This explains why their approximation works well for their purpose. However, for training, we need a softmax approximation that works well uniformly on large intervals; therefore, previous algorithms are not exactly suitable for training.
Although the one-vs-rest softmax is used for training in \cite{jin2020secure}, the input range is limited to $[-4, 4]$.

\subsection{Softmax input with vanilla SGD training}

\pgfplotstableread[col sep = comma]{plots2/cifar10_sgd.csv}\cifardata
\pgfplotstableread[col sep = comma]{plots2/mnist_sgd.csv}\mnistdata
\pgfplotstableread[col sep = comma]{plots2/facial-mask-detection_sgd.csv}\maskdata
\pgfplotstableread[col sep = comma]{plots2/dermamnist_sgd.csv}\dermadata
\pgfplotstableread[col sep = comma]{plots2/snips_sgd.csv}\snipsdata

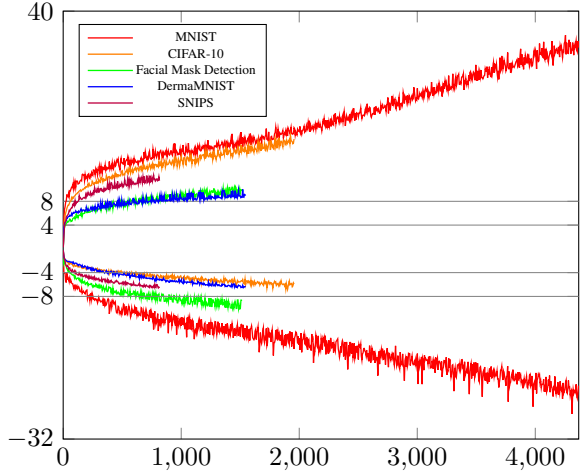
\begin{figure}
    \centering
    \begin{tikzpicture}
      \begin{axis}[
        legend pos = north west,
        legend style={nodes={scale=0.5, transform shape}},
        xmin = 0, xmax = 4368,
        ymin = -32, ymax = 40,
        ytick = {-32, -8,-4, 4, 8, 40}
        ]
        \addplot[line width = 0.5pt, red] table [x index = {0}, y index = {1}] {\mnistdata};
        \addplot[line width = 0.5pt, orange] table [x index = {0}, y index = {1}] {\cifardata};
        \addplot[line width = 0.5pt, green] table [x index = {0}, y index = {1}] {\maskdata};
        \addplot[line width = 0.5pt, blue] table [x index = {0}, y index = {1}] {\dermadata};
        \addplot[line width = 0.5pt, purple] table [x index = {0}, y index = {1}] {\snipsdata};
        \addplot[mark=none, gray, line width=0.1pt] coordinates {(0,8) (4368,8)};
        \addplot[mark=none, gray, line width=0.1pt] coordinates {(0,-8) (4368,-8)};
        \addplot[mark=none, gray, line width=0.1pt] coordinates {(0,4) (4368,4)};
        \addplot[mark=none, gray, line width=0.1pt] coordinates {(0,-4) (4368,-4)};
        \addplot[line width = 0.5pt, red] table [x index = {0}, y index = {2}] {\mnistdata};
        \addplot[line width = 0.5pt, orange] table [x index = {0}, y index = {2}] {\cifardata};
        \addplot[line width = 0.5pt, green] table [x index = {0}, y index = {2}] {\maskdata};
        \addplot[line width = 0.5pt, blue] table [x index = {0}, y index = {2}] {\dermadata};
        \addplot[line width = 0.5pt, purple] table [x index = {0}, y index = {2}] {\snipsdata};
        \legend{MNIST,CIFAR-10,Facial Mask Detection,DermaMNIST,SNIPS}
      \end{axis}
    \end{tikzpicture}
    \caption{Maximum and minimum value of input of softmax at each step (minibatch) for each dataset, where the model is trained with vanilla SGD.}
    \label{fig:softmaxinputsgd}
\end{figure}

Figure \ref{fig:softmaxinputsgd} shows how the minimum and maximum values of input of softmax vary as training proceeds when we use vanilla SGD instead of NAG.
Although the input values increase slower than that with NAG, the values are still significant and cannot be covered by the previous softmax approximation methods.
(Note that it took about 10 times longer than NAG to train models.)

\section{Encrypted matrix multiplication}

\subsection{Matrix composed of multiple blocks}
In the main article, we assumed that each matrix could fit into a single block (message or ciphertext) to simplify explanations.
Now we give a detailed description of the matrix multiplication algorithms for matrices whose encodings are composed of multiple blocks.

Let $A \in \mathbb{R}^{a \times b}$ be a matrix.
Fix a unit matrix shape $s_{0} \times s_{1}$, where $s = s_{0}s_{1}$ equals the number of slots in a single ciphertext.
When $a \leq s_{0}$ and $b  \leq s_{1}$, 
we apply zero-padding to the right end and bottom of $A$ 
and encode it into a single block in a row-major manner. 
$\enc{A}$ denotes this encoding.
For example, if $a = b = 3$ and $s_{0} = s_{1} = 4$, the matrix $A = (a_{ij})_{0\leq i, j < 3}$ is encoded as
\begin{align*}
    \enc{A} = (a_{00}, a_{01}, a_{02}, 0, a_{10}, a_{11}, a_{12}, 0,  a_{20}, a_{21}, a_{22}, 0, 0, 0, 0, 0).
\end{align*}
If $a > s_{0}$ or $b > s_{1}$, we first zero-pad $A$ so that the number of rows and columns are multiples of $s_{0}$ and $s_{1}$, respectively, and then split $A$ into submatrices of shape $s_{0} \times s_{1}$. 
Then encoding each submatrix gives an encoding $\enc{A}$ of $A$,
$$
    \enc{A} = \{ \enc{A}_{i, j} \}_{0 \leq i < \lceil a / s_{0} \rceil, 0 \leq j < \lceil b / s_{1} \rceil},
$$
where $\enc{A}_{i, j}$ is the encoding of $(i, j)$-th submatrix.
(See also Figure \ref{fig:encode}.)
As explained in \cite{crockett2020low}, we can extend the $\RS$ and $\CS$ algorithm for large matrices.

\begin{figure}[ht]
    \centering
    \includegraphics[width=0.45\textwidth]{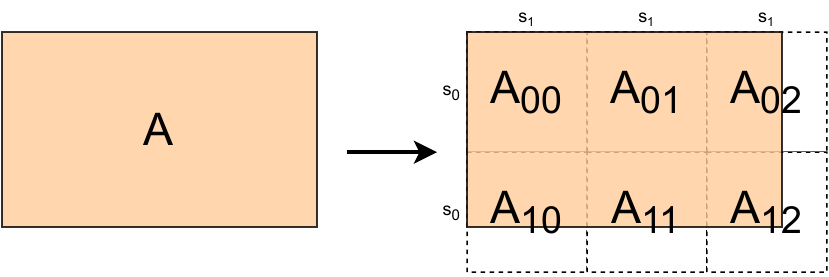}
    \caption{Encoding of a matrix $A \in \mathbb{R}^{13 \times 21}$ into 6 blocks where each encoded matrix of unit shape $8 \times 8$.}
    \label{fig:encode}
\end{figure}

\subsection{Proofs}

Here we give proofs for the propositions on encrypted matrix multiplication algorithms.

\begin{proposition}
\label{prop:abtappendix}
Let $A, B$ as above.
We have $A\overline{B}^{\T} = X + \conj(X)$ where
\small
\begin{equation}
X = \sum_{0 \leq k < c/2} \CS(A \odot \RU(\overline{B}_{\cplx}, k)) \odot M_{\cplx}^{(k, c)}.
\label{eqn:abt}
\end{equation}
\normalsize
Here $M^{(k, d)}$ is an off-diagonal masking matrix with entries
$$
M^{(k, d)}_{i, j} = 
    \begin{cases}
        1 & j \equiv i + k \,(\mathrm{mod}\, d) \\
        0 & \text{otherwise}
    \end{cases}
$$
and $M_{\cplx}^{(k, c)}$ is a \emph{complexified} version of the mask, which is
$$
M_{\cplx}^{(k, c)} = \frac{1}{2} M^{(k, c)} - \frac{\sqrt{-1}}{2} M^{(k + c/2, c)}
$$
\end{proposition}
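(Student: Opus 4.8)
The plan is to verify the claimed identity entrywise, showing that $(X + \conj(X))_{i,j} = (A\overline{B}^{\T})_{i,j}$ for every valid index pair $(i,j)$. The first observation is that the right-hand target is simple: since $\overline{B}$ is the vertical tiling of $B$ with period $c$, we have $\overline{B}_{j,l} = B_{j \bmod c, l}$, and hence $(A\overline{B}^{\T})_{i,j} = \sum_{l} A_{i,l} B_{j \bmod c, l} = (AB^{\T})_{i, j \bmod c}$. So the task reduces to proving that $X + \conj(X)$ reproduces the entries of $AB^{\T}$, read off along the cyclic column index $j \bmod c$.

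First I would unwind the inner operand. Because $\CS$ replaces each row by the sum of its entries, the value $\CS(A \odot \RU(\overline{B}_{\cplx}, k))_{i,j}$ does not depend on $j$; expanding the Hadamard product, the definition $(\overline{B}_{\cplx})_{i,l} = B_{i \bmod c, l} + \sqrt{-1}\, B_{(i + c/2)\bmod c, l}$, and the row rotation $\RU(\overline{B}_{\cplx}, k)_{i,l} = (\overline{B}_{\cplx})_{(i+k)\bmod c, l}$, this sum becomes
$$
\sum_{l} A_{i,l}\left( B_{(i+k)\bmod c, l} + \sqrt{-1}\, B_{(i+k+c/2)\bmod c, l}\right) = (AB^{\T})_{i,(i+k)\bmod c} + \sqrt{-1}\,(AB^{\T})_{i,(i+k+c/2)\bmod c}.
$$
This is the heart of the complexification trick: a single complex slot now carries two diagonals of $AB^{\T}$ offset by $c/2$, which is precisely what lets the sum run only over $0 \le k < c/2$ rather than over all $c$ residues.

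Next I would apply the complexified masks and sum over $k$. Fixing $(i,j)$ and setting $m = (j-i)\bmod c$, the two indicators inside $M_{\cplx}^{(k,c)} = \tfrac12 M^{(k,c)} - \tfrac{\sqrt{-1}}{2} M^{(k+c/2,c)}$ select a single surviving $k$: if $m < c/2$ only the first mask fires (at $k=m$), while if $m \ge c/2$ only the second fires (at $k = m - c/2$). In either case, substituting the expression above and using $(i+m)\bmod c = j \bmod c$, a short computation gives $X_{i,j} = \tfrac12 (AB^{\T})_{i,j\bmod c} \pm \tfrac{\sqrt{-1}}{2}(AB^{\T})_{i,(j\pm c/2)\bmod c}$, in which only the first term is real. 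Adding $\conj(X)_{i,j}$ — which negates the explicit $\sqrt{-1}$ because all entries of $AB^{\T}$ are real — cancels the imaginary diagonal and leaves exactly $(AB^{\T})_{i,j\bmod c}$, matching the target from the first step.

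The main obstacle I anticipate is purely the modular-index bookkeeping: one must track the residues $(i+k)\bmod c$, $(i+k+c/2)\bmod c$ and the two mask offsets simultaneously, and then verify in each of the two cases that precisely one value of $k$ in $\{0,\dots,c/2-1\}$ contributes and that the surviving real part lands on column $j\bmod c$ rather than $(j+c/2)\bmod c$. Some care is also needed at the padding boundaries (entries with $l \ge b$ or rows beyond $a$), but those vanish by the zero-padding convention and do not affect the identity.
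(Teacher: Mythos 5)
Your proof is correct and follows essentially the same approach as the paper's: an entrywise verification in which the off-diagonal masks select a single diagonal of $AB^{\T}$ and the complexification packs the $k$-th and $(k+c/2)$-th diagonals into one complex value whose imaginary part is cancelled by adding the conjugate. The only difference is organizational --- the paper first proves the real identity $A\overline{B}^{\T} = \sum_{0 \leq k < c} \CS(A \odot \RU(\overline{B}, k)) \odot M^{(k, c)}$ and then shows each complexified summand equals the sum of two real summands, whereas you expand the complexified sum directly --- but the underlying computation is the same.
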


\begin{proof}
We will first show that 
\begin{equation}
A\overline{B}^{\T} = \sum_{0 \leq k < c} \CS(A \odot \RU(\overline{B}, k)) \odot M^{(k, c)}.
\label{eqn:abtorig}
\end{equation}
It is enough to show that the $(i, j)$-th entry of the right-hand side equals
$\mathbf{a}_{i}\mathbf{b}_{j}^{\T}$, where $\mathbf{a}_{i}$ (resp. $\mathbf{b}_{j}$) is $i$-th (resp. $j$-th) row of $A$ (resp. $\overline{B}$).
Choose $0 \leq k_{0} < c$ such that $j - i \equiv k_{0} \,(\mathrm{mod}\,c)$.
Then all the $(i, j)$-th entries of summands of the right hand side vanishes 
except for the summand with index $k = k_0$ because of the masking.
For $k = k_{0}$, the $(i, j)$-th entry equals the dot product of the $i$-th row of $A$ and the $i$-th row of $\RU(B, k_0)$, and the latter
is $i + k_{0} \equiv j$-th row of $B$.

Now, we can see that 
\begin{align*}
\RU({\overline{B}_{\cplx}}, k) = \RU({\overline{B}}, k) + \sqrt{-1}\RU({\overline{B}}, k + c/2)
\end{align*}
and by the linearity of $\CS$ and bi-linearity of $\odot$, we get
\begin{align*}
    \CS(A\odot \RU(\overline{B}_{\cplx}, k)) =\CS(A\odot \RU(\overline{B}, k)) +\sqrt{-1}\CS(A \odot \RU(\overline{B}, k + c/2)).
\end{align*}
Now, combining this with equation,
$$
\Re((x + zi) (y - wi)) = xy + zw,\quad x, y, z, w \in \mathbb{R},
$$
we get
\begin{align*}
    &2\Re(\CS(A\odot \RU(\overline{B}_{\cplx}, k)) \odot M_{\cplx}^{(k, c)}) \\
    &= \CS(A\odot \RU(\overline{B}, k)) \odot M^{(k, c)} \\
    &+ \CS(A \odot \RU(\overline{B}, k + c/2)) \odot M^{(k + c/2, c)}.
\end{align*}
In other words, the $k$-th summand of Equation \eqref{eqn:abt} equals to the sum of the $k$-th and $(k + c/2)$-th summands of Equation \eqref{eqn:abtorig}, and this completes the proof.
\end{proof}

\begin{proposition}
\label{prop:atbdepthappendix}
$\underline{A}^{\T}B =  X + \mathsf{conj}(X)$, where 
\small
$$
X = \sum_{0 \leq k < c/2} \RS(\lrot(\underline{A}_{\cplx}, k) \odot \PRU(B, k)) \odot M_{\cplx}^{(-k, c)}. 
$$
\normalsize
\end{proposition}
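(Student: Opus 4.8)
The plan is to mirror the proof of Proposition~\ref{prop:abtappendix}: first establish an un-complexified ``diagonal-sum'' identity for $\underline{A}^{\T}B$, then show that each complexified summand, via the real-part trick, packs two diagonal terms into one so that $X+\conj(X)$ recovers the full real sum. The two novelties I must handle beyond the $AB^{\T}$ case are (i) the mask period and the direction of the diagonal (it will be $M^{(-k,c)}$, not $M^{(k,c)}$, since now we contract over the rows of $A$), and (ii) the interaction of the depth-free full-slot rotation $\lrot$ with the corrective operation $\PRU$.

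The real identity I would prove first is
$$
\underline{A}^{\T}B = \sum_{0\le k<c}\RS(\RL(\underline{A},k)\odot B)\odot M^{(-k,c)}.
$$
Since $\underline{A}$ is the horizontal tiling of $A$ with period $c$, we have $\RL(\underline{A},k)_{i,j}=A_{i,(j+k)\bmod c}$, so $\RS(\RL(\underline{A},k)\odot B)_{i,j}=\sum_l A_{l,(j+k)\bmod c}B_{l,j}$, independent of $i$ and equal to the dot product of column $(j+k)\bmod c$ of $A$ with column $j$ of $B$. Masking by $M^{(-k,c)}$ keeps the entries with $i\equiv j+k \pmod c$, i.e.\ $(j+k)\bmod c=i$, turning the $(i,j)$-entry into $\sum_l A_{l,i}B_{l,j}=(A^{\T}B)_{i,j}$; summing over $k\in\{0,\dots,c-1\}$ supplies each residue exactly once and reproduces the vertical tiling of $A^{\T}B$. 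This is the analogue of Equation~\eqref{eqn:abtorig} with the roles of $\CS/\RU/M^{(k,c)}$ and $\RS/\RL/M^{(-k,c)}$ interchanged.

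The crux — and the step I expect to be the main obstacle — is replacing the depth-consuming $\RL$ by the free rotation $\lrot$ together with the corrective $\PRU$. I would isolate this as a lemma: for any matrix $M$ that is horizontally tiled with period $c$,
$$
\RS(\lrot(M,k)\odot\PRU(B,k))=\RS(\RL(M,k)\odot B).
$$
The point is that $\lrot(M,k)$ agrees with the clean column rotation $\RL(M,k)$ in the first $s_1-k$ columns, but in the last $k$ columns the slot rotation wraps across a row boundary and picks up the \emph{next} row: $\lrot(M,k)_{i,j}=M_{(i+1)\bmod s_0,\,(j+k)\bmod c}$ there. Applying $\PRU(B,k)$, which rotates exactly those last $k$ columns of $B$ up by one, lets me reindex the row-sum cyclically by $i\mapsto(i+1)\bmod s_0$; this is a bijection of the $s_0$ summed rows and hence leaves $\RS$ unchanged, and the contaminated row index on $M$ and the shifted row index on $B$ realign to $\sum_l M_{l,(j+k)\bmod c}B_{l,j}$. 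The delicate part is precisely the boundary bookkeeping at the ``last $k$ columns'' cut together with the cyclic wraparound at row $s_0-1$.

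With the lemma in hand the complexification is routine. Writing $\underline{A}_{\cplx}=\underline{A}+\sqrt{-1}\,\RL(\underline{A},c/2)$ and using linearity of $\lrot,\RS$ and bilinearity of $\odot$, the $k$-th summand of $X$ becomes $(P_k+\sqrt{-1}\,Q_k)\odot M^{(-k,c)}_{\cplx}$, where $P_k=\RS(\lrot(\underline{A},k)\odot\PRU(B,k))$ and $Q_k=\RS(\lrot(\RL(\underline{A},c/2),k)\odot\PRU(B,k))$. Applying the lemma to $M=\underline{A}$ and to $M=\RL(\underline{A},c/2)$ (both period-$c$ tiled) gives $P_k=\RS(\RL(\underline{A},k)\odot B)$ and $Q_k=\RS(\RL(\underline{A},k+c/2)\odot B)=P_{k+c/2}$. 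Finally, since $P_k,P_{k+c/2},M^{(-k,c)},M^{(-k+c/2,c)}$ are all real and $M^{(-k+c/2,c)}=M^{(-(k+c/2),c)}$, the identity $\Re((x+\sqrt{-1}z)(y-\sqrt{-1}w))=xy+zw$ yields
$$
2\Re\!\big((P_k+\sqrt{-1}\,P_{k+c/2})\odot M^{(-k,c)}_{\cplx}\big)=P_k\odot M^{(-k,c)}+P_{k+c/2}\odot M^{(-(k+c/2),c)}.
$$
Summing over $0\le k<c/2$ and recalling $X+\conj(X)=2\Re(X)$ collapses the $k$ and $k+c/2$ contributions into the full sum $\sum_{0\le m<c}\RS(\RL(\underline{A},m)\odot B)\odot M^{(-m,c)}=\underline{A}^{\T}B$, exactly as in the $AB^{\T}$ case.
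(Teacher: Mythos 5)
Your proposal is correct and follows essentially the same route as the paper: the paper likewise reduces everything to the key identity $\RS(\lrot(\underline{A}_{\cplx},k)\odot\PRU(B,k))=\RS(\RL(\underline{A}_{\cplx},k)\odot B)$ (verified, as you do, by noting the first $s_1-k$ columns coincide and the last $k$ columns are a cyclic row shift that leaves the column sums invariant), and then handles the remaining $\RL$-based identity and the complexification exactly as in the $AB^{\T}$ case. Your write-up merely makes explicit the un-complexified diagonal-sum identity and the real-part bookkeeping that the paper compresses into ``can be proved in a similar way as Proposition 2.''
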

\begin{proof}
Once we show the following identity
\begin{align*}
\RS(\lrot(\underline{A}_{\cplx}, k)\odot \PRU(B, k)) = \RS(\RL(\underline{A}_{\cplx}, k) \odot B),
\end{align*}
our equation is equivalent to 
$$
X = \sum_{0 \leq k < c/2} \RS(\RL(\underline{A}_{\cplx}, k) \odot B) \odot M_{\cplx}^{(-k, c)}. 
$$
which can be proved in a similar way as Proposition 2.
We can check that the first $(b - k)$ columns of $\RL^{*}(\underline{A}_{\cplx}, k)$ coincide with them of $\RL(\underline{A}_{\cplx},k)$,
and same thing holds for $\PRU(B, k)$ and $B$.
The last $k$ columns of $\RL^{*}(\underline{A}_{\cplx}, k) \odot \PRU(B, k)$ are
$$
\begin{bmatrix}
    x_{2,1}\cdot y_{2,b-k+1} & \dots & x_{2,k}\cdot y_{2,b} \\
    x_{3,1}\cdot y_{3,b-k+1} & \dots & x_{3,k}\cdot y_{3,b} \\
    \vdots & \ddots & \vdots \\
    x_{1,1}\cdot y_{1,b-k+1} & \dots & x_{1,k}\cdot y_{1,b} \\
\end{bmatrix}
$$
where $\underline{A}_{\cplx} = (x_{i, j})$ and $B = (y_{i, j})$, and the sums of entries in each column equal to
them of $\RL(\underline{A}_{\cplx}) \odot B$.
\end{proof}

\subsection{Algorithms}

We give detailed algorithms that we used for computing encrypted matrix multiplications.
It is worth noting that there are some restrictions on the shape of matrices and unit matrices for encoding.
For example, Algorithm \ref{alg:abt} requires that the number of rows $c$ of $B$ should satisfy $1 < c \leq s_{0}$. 
Hence we set the unit matrix shape $s_{0}, s_{1}$ to satisfy the restriction for the actual implementation.

We first briefly explain how the operations like addition, multiplication, $\RS$, $\CS$, $\lrot$, and $\rrot$ are extended to encodings composed of several blocks, i.e. when 
$$\enc{A} = \{\enc{A}_{i, j}\}_{0 \leq i < m, 0 \leq j < n}.$$
Addition and multiplication are simple.
Let $\enc{A_{1}} = \{\enc{A_{1}}_{i, j}\}_{0\leq i < m_{1}, 0 \leq j < n_{1}}$ and $\enc{A_{2}} = \{ \enc{A_{2}}_{i, j}\}_{0 \leq i < m_{2}, 0 \leq j < n_{1}}$.
If $m_1 = m_2$ and $n_1 = n_2$, we define addition and multiplication as 
\begin{align*}
    \enc{A_{1}} + \enc{A_{2}} &= \{ \enc{A_{1}}_{i,j} + \enc{A_{2}}_{i,j} \}_{0 \leq i < m_1, 0\leq j <n_1}, \\
    \enc{A_{1}} \odot \enc{A_{2}} &= \{ \enc{A_{1}}_{i,j} \odot \enc{A_{2}}_{i,j} \}_{0 \leq i < m_1, 0 \leq j < n_1}.
\end{align*}
We can also define addition and multiplication when $m_1 = 1$ (or $m_2 = 1$) and $n_1 = n_2$, or $m_1 = m_2$ and $n_1 = 1$ (or $n_2 =1$) by duplicating sub-encodings.

To compute $\RS(\enc{A})$, we first add the sub-encodings vertically and apply $\RS$ to each block to get 
$$\RS(\enc{A}) = \{ \RS(\sum_{0 \leq i < m} \enc{A}_{i, j}) \}_{0 \leq j < n}.$$
Similarly, we define $\CS(\enc{A})$ as
$$
\CS(\enc{A}) = \{ \CS(\sum_{0 \leq j < n} \enc{A}_{i, j})\}_{0 \leq i < m}.
$$
Finally, we define $\lrot$ and $\rrot$ for $\enc{A} = \{\enc{A}_{i, j}\}$ as 
\begin{align*}
    \lrot(\enc{A}, k)  &= \{ \lrot(\enc{A}_{i, j}, k) \}, \\
    \rrot(\enc{A}, k)  &= \{ \rrot(\enc{A}_{i, j}, k) \},
\end{align*}
and we extend $\RU, \RL, \PRU$ similarly.

The following algorithms (Algorithms 1 to 4) are the actual algorithms we use for implementation.

\begin{algorithm}
\caption{\textsf{DiagABT}: Homomorphic evaluation of $tAB^{\T}$}
\label{alg:abt}
\textbf{Input}: $\enc{A}$, $\enc{\overline{B}}$, for $A \in \mathbb{R}^{a \times b}, B \in \mathbb{R}^{c \times b}$, $1 < c \leq s_{0}$, and $t\in\mathbb{R}$ \\
\textbf{Output}: $\enc{t A \overline{B}^{\T}}$
\begin{algorithmic}[1]
    \STATE ${B_{\cplx}} = \enc{\overline{B}} + \sqrt{-1} \RU(\enc{\overline{B}}, c/2)$
    \FOR{$0 \leq k < \frac{c}{2}$}
        \STATE $B_{k} = \RU(B_{\cplx}, k)$
        \STATE $R_{k} = \enc{A} \odot B_{k}$
        \STATE $R_{k} = \CS(R_{k})$
        \STATE $R_{k} = R_{k} \odot {t M_{\cplx}^{(k, c)}}$
    \ENDFOR
    \STATE $X = \sum_{0 \leq k < c/2} R_{k}$
    \STATE $\enc{t A \overline{B}^{\T}} = X + \conj(X)$ \\
\end{algorithmic}
\end{algorithm}

\begin{algorithm}
\caption{$\RL(\enc{\underline{A}}, k)$}
\label{alg:rl}
 \textbf{Input}: $\enc{{A}}$ where $A \in \mathbb{R}^{a \times s_{1}}$, $0 \leq k < s_{1}$ \\
 \textbf{Output}: $\RL(\enc{{A}}, k)$
\begin{algorithmic}[1]
    \STATE $D_{k} = (d_{ij})$ where 
    $$
    d_{ij} = \begin{cases} 1 & 0 \leq j < s_{1} - k \\ 0 & \text{otherwise}\end{cases}
    $$
    \STATE $A_{1} = \lrot(\enc{A}, k)$
    \STATE $A_{2} = A_{1} \odot \enc{D_{k}}$ \\
    \STATE $\RL(\enc{A}, k) = A_{2} + \rrot(A_{1} - A_{2}, s_{1})$
\end{algorithmic}
\end{algorithm}

\begin{algorithm}
\caption{$\PRU({B}, k)$}
\label{alg:pru}
 \textbf{Input}: $\langle {B} \rangle$ for $B \in \mathbb{R}^{a \times b}$, $0 \leq k < s_{1}$ \\
 \textbf{Output}: $\PRU({B}, k)$
\begin{algorithmic}[1]
    \STATE $D_{k} = (d_{ij})$ where 
    $$
    d_{ij} = \begin{cases} 1 & 0 \leq j < s_{1} - k \\ 0 & \text{otherwise}\end{cases}
    $$
    \STATE $B' = \enc{B} \odot \enc{D_{k}}$ \\
    \STATE $\PRU(B, k) = B' + \RU(\enc{B} - B', 1)$
\end{algorithmic}
\end{algorithm}

\begin{algorithm}
\caption{\textsf{DiagATB}: Homomorphic evaluation of $tA^{\T}B$}
\label{alg:atbcplxlowdepth}
 \textbf{Input}: $\enc{\overline{A}}$, $\enc{B}$,
for $A \in \mathbb{R}^{a \times c}, B \in \mathbb{R}^{a \times b}$, and $t\in\mathbb{R}$ \\
 \textbf{Output}: $\enc{t\overline{A}^{\T}B}$
\begin{algorithmic}[1]
    \STATE ${\overline{A}_{\mathsf{cplx}}} = \enc{\overline{A}} + \sqrt{-1} \RL(\enc{\overline{A}}, c/2)$
    \FOR{$0 \leq k < \frac{c}{2}$}
        \IF{$\mathsf{level}(A) < \mathsf{level}(B)$}
            \STATE $A_{k} = \lrot(\overline{A}_{\cplx}, k)$
            \STATE $B_{k} = \PRU(\enc{B}, k)$
            \STATE $R_{k} = A_{k} \odot B_{k}$
        \ELSE
            \STATE $A_{k} =  \RL(\overline{A}_{\cplx}, k)$
            \STATE $R_{k} = A_{k} \odot \enc{B}$
        \ENDIF
        \STATE $R_{k} = \RS(R_{k})$
        \STATE $R_{k} = R_{k} \odot {t M_{\cplx}^{(-k, a)}}$
    \ENDFOR
    \STATE $X = \sum_{0 \leq k < b/2} R_{k}$
    \STATE $\enc{t\overline{A}^{\T}B} = X + \conj(X)$ \\
\end{algorithmic}
\end{algorithm}

\section{Experiments}

\subsection{Dataset description}
\begin{itemize}
    \item \textbf{MNIST} \cite{deng2012mnist} is one of the most widely used image classification dataset, consisting of 70k images of handwritten digits, from 0 to 9.
    \item \textbf{CIFAR-10} \cite{krizhevsky2009learning} is another famous image classification dataset, consisting of 60k color images of 10 classes: airplane, automobile, bird, cat, deer, dog, frog, horse, sheep, truck.

    \item \textbf{Face Mask Detection} \cite{larxel2020face} is a dataset from Kaggle that contains 853 images with several peoples.
    Each face is classified as one of the following three: wearing a mask correctly, wearing a mask incorrectly, and not wearing a mask.
    With given metadata on each image, we crop faces and make them into single images, which results in total 4072 images.

    \item \textbf{DermaMNIST} \cite{yang2023medmnist} is one of the MedMNIST collection, which is a medical dataset of 10015 common pigmented skin lesions images based on the HAM10000 dataset \cite{tschandl2018ham10000}, where each image is labeled as one of the 7 diseases.

    \item \textbf{SNIPS} \cite{coucke2018snips} is a dataset of crowd-sourced queries collected from Snips Voice Platform, distributed along 7 user intents.
\end{itemize}

Table \ref{tab:datasize} describes the number of samples in each split (train, validation, test) for each benchmark.
The splits are already given for DermaMNIST and SNIPS datasets, and we randomly split original train sets into train and validation sets for the other datasets of the ratio 7:1.
We used these splits to find hyperparameters and report the final performances (execution time and model accuracy) in Table 2 of the main article.

\begin{table}[ht]
\centering
    \begin{tabular}{c|cccc}
        \toprule
            Dataset & Train & Validation & Test & Total \\
        \midrule
            MNIST & 52500 & 7500 & 10000 & 70000\\
            CIFAR-10 & 43750 & 6250 & 10000 & 60000\\
            Face Mask Detection & 2849 & 408 & 815 & 4072 \\
            DermaMNIST & 7007 & 1003 & 2005 & 10015\\
            SNIPS & 13084 & 700 & 700 & 14484 \\
        \bottomrule
    \end{tabular}
    \caption{Number of samples in each benchmark dataset.}
    \label{tab:datasize}
\end{table}

\subsection{Hyperparameters}
Table \ref{tab:hyperparam} shows a list of hyperparameters (minibatch sizes and learning rates) that are used for experiments.
For early-stopping, we set patience as 3 so that the server trains until the validation loss does not decrease further for 3 epochs.

\begin{table}[ht]
\centering
    \begin{tabular}{c|cccc}
        \toprule
            Dataset & total epochs & best epoch & batch size & learning rate \\
        \midrule
            MNIST & 7 & 4 & 1024 & 2.0\\
            CIFAR-10 & 9 & 6 & 2048 & 1.0 \\
            Face Mask Detection & 22 & 19 & 512 & 0.5 \\
            DermaMNIST & 23 & 20 & 1024 & 0.3 \\
            SNIPS & 14 & 11 & 1024 & 1.0 \\
        \bottomrule
    \end{tabular}
    \caption{Batch size, learning rate, and number of epochs (early-stopped and best) for each benchmark dataset.}
    \label{tab:hyperparam}
\end{table}

\subsection{Encrypted matrix multiplication}
\label{appendix:matrixmult}

First of all, for the comparison of encrypted matrix multiplication algorithms (Table 4 of the main article), we set $s_{0} = a$ for all experiments (our $\mathsf{DiagABT}, \mathsf{DiagATB}$ algorithms and $\mathsf{ColMajor}, \mathsf{RowMajor}$ of \cite{crockett2020low}).
The Table \ref{tab:matmulcomplexity} shows the computational complexity of each algorithm, and
Table \ref{tab:matnum} shows the actual number of constant multiplications, multiplications, and rotations used for each encrypted matrix multiplication algorithm.

\begin{table*}[ht]
\centering
\resizebox{\columnwidth}{!}{%
\setlength\extrarowheight{2pt}
\setlength\tabcolsep{1.5pt}
\begin{tabular}{c|c|c|c|c|c|c}
\toprule
\multirow{2}{*}{Ops}& \multicolumn{3}{c|}{$AB^{\T}$ ($A \in \mathbb{R}^{a\times b}, B \in \mathbb{R}^{c \times b}$)} & \multicolumn{3}{c}{$A^{\T}B$ ($A \in \mathbb{R}^{a \times c}, B \in \mathbb{R}^{a \times b}$)} \\
\cline{2-7}
 & \cite{jin2020secure}$^{*}$ & \textsf{ColMajor} & \textsf{DiagABT} &  \cite{jin2020secure}$^{*}$ & \textsf{RowMajor} & \textsf{DiagATB}  \\
\midrule
$\mathsf{CMult}$ &  0 & $O(c(\frac{a}{s_0}+ \frac{b}{s_1}))$ & $O(\frac{ac}{2s_0})$ & 0 & $O(c(\frac{a}{s_0} + \frac{b}{s_1}))$ & $O(\frac{abc}{2s})$ \\
$\mathsf{Mult}$ & $O(bc)$ & $O(\frac{abc}{s})$ & $O(\frac{abc}{2s})$  & $O(bc)$ & $O(\frac{abc}{s})$ & $O(\frac{abc}{2s})$\\
$\mathsf{Rot}$ & 0 & $O(c (\frac{a}{s_0}\log s_1 + \frac{b}{s_1} \log s_0))$ & $O(c(\frac{a}{s_0}\log s_1 + \frac{b}{2s_1}) )$ & $O(bc \log s)$ &$O(c (\frac{a}{s_0}\log s_1 + \frac{b}{s_1} \log s_0))$ & $O(c (\frac{ab}{2s} + \frac{b}{2s_1}\log s_0))$\\
\bottomrule
\end{tabular}
}
\caption{Complexity of matrix multiplication algorithms. Note that $s = s_0 s_1$.}

\label{tab:matmulcomplexity}
\end{table*}

\begin{table*}[ht]
\centering
\setlength\extrarowheight{2pt}
\begin{tabular}{c|ccc|ccc}
\toprule
\multirow{2}{*}{$(a, b, c)$}& \multicolumn{3}{c|}{$AB^{\T}$ ($A \in \mathbb{R}^{a\times b}, B \in \mathbb{R}^{c \times b}$)} & \multicolumn{3}{c}{$A^{\T}B$ ($A \in \mathbb{R}^{a \times c}, B \in \mathbb{R}^{a \times b}$)} \\
\cline{2-7}
 & \cite{jin2020secure}$^{*}$ & \textsf{ColMajor} & \textsf{DiagABT} &  \cite{jin2020secure}$^{*}$ & \textsf{RowMajor} & \textsf{DiagATB} \\
\midrule
\mrow{3}{$(128, 128, 4)$} & 0 & 4 & 0 & 0 & 4 & 2 \\
\cline{2-7}
 & 512 & 4 & 2 & 512 & 4 & 0 \\
\cline{2-7}
 & 0 & 63 & 34 & 7680 & 63 & 18\\
\midrule
\mrow{3}{$(256, 256, 8)$} & 0 & 16 & 0 & 0 & 8 & 4\\
\cline{2-7}
& 2048 & 16 & 8 & 2048 &16 & 0\\
\cline{2-7}
& 0 & 191 & 64 & 30720 & 191 & 72 \\
\midrule
\mrow{3}{$(512, 769, 4)$} &0 & 52 & 0 &0 & 4 & 2\\
\cline{2-7}
& 3076 & 52 & 26 & 3076 & 52 & 0\\
\cline{2-7}
& 0 & 495 & 50 & 46140 & 495 & 238\\
\midrule
\mrow{3}{$(1024, 769, 8)$} & 0 & 200 & 0 & 0 & 8 & 4 \\
\cline{2-7}
& 6152 & 200 & 100 & 6152 & 200 & 0  \\
\cline{2-7}
& 0 & 2047 & 140 & 92280 & 2047 & 1008 \\
\midrule
\mrow{3}{$(2048, 769, 16)$} & 0& 784 & 0 & 0& 16 & 8 \\
\cline{2-7}
& 12304 & 784 & 392 & 12304 & 784 & 0\\
\cline{2-7}
& 0 & 8703 & 456 & 184560 & 8703 & 4328 \\
\bottomrule
\end{tabular}
\caption{The number of constant multiplications (\textsf{CMult}, first rows), multiplications (\textsf{Mult}, second rows),  and rotations (\textsf{Rot}, third rows).}
\label{tab:matnum}
\end{table*}

\subsection{Using larger pre-trained models}
\label{appendix:largermodel}

We also conducted experiments with larger pre-trained models.
Especially, we replace the \texttt{ViT-Base} model for the image dataset with the \texttt{ViT-Large} model and see how the performance changes.
The hidden dimensions of the models are 768 and 1024, respectively, and the other information on the architectures of the models can be found in \cite{dosovitskiy2021an}.
The overall results with these larger models can be found in Table \ref{tab:largevit}, which shows that we can still apply HETAL and fine-tune the models with encrypted data in a reasonable amount of time (in 1.2 hours). The results from these experiments illustrate that HETAL is flexible and scales well with larger models.

We used the same minibatch sizes as in Table \ref{tab:hyperparam}, and also set patience as 3 for early-stopping.
The list of learning rates and number of epochs for each experiment can be found in Table \ref{tab:hyperparam-large}.

\begin{table*}[t]
\centering
\setlength\extrarowheight{2pt}
\begin{tabular*}{\textwidth}{@{\extracolsep{\fill}}*{7}{c}}
\toprule
\multirow{3}{*}{dataset} & \multirow{3}{*}{model} & \multicolumn{3}{c}{encrypted} & \multicolumn{1}{c}{unencrypted} \\
\cline{3-5}  \cline{6-6}
& &  \multicolumn{2}{c}{Running time} & \multirow{2}{*}{ACC (a)} &  \multirow{2}{*}{ACC (b)} &\multirow{2}{*}{ACC loss ((b) - (a))}\\
\cline{3-4}
& & Total (s) & Time / Iter (s) & & &  \\
\cline{1-7}
\multirow{2}{*}{MNIST} & \texttt{Base} & 3442.29 & 9.46 & 96.73\% & 97.24\% & 0.51\%\\
 & \texttt{Large} & 4159.60 & 11.43 & 97.46\% & 98.13\% & 0.67\% \\
\midrule
\multirow{2}{*}{CIFAR-10} & \texttt{Base} & 3114.30 & 15.72 & 96.57\% & 96.62\% & 0.05\% \\
 & \texttt{Large} & 3073.06 & 19.95 & 97.36\% & 97.39\% & 0.03\% \\
\midrule
\multirow{2}{*}{Face Mask Detection} & \texttt{Base} & 566.72 & 4.29 & 95.46\% & 95.46\% & 0.00\% \\
 & \texttt{Large} & 347.94 & 5.80 & 95.34\% & 95.34\% & 0.00\% \\
\midrule
\multirow{2}{*}{DermaMNIST} & \texttt{Base} & 1136.99 & 7.06 & 76.06\% & 76.01\% & -0.05\% \\
 & \texttt{Large} & 879.27 & 8.37 & 76.86\% & 76.76\% & -0.10\%  \\
\bottomrule
\end{tabular*}
\caption{HETAL with different sizes of ViTs.}
\label{tab:largevit}
\end{table*}

\begin{table}[ht]
\centering
    \begin{tabular}{cc|cccc}
        \toprule
            Dataset & model & total epochs & best epoch & batch size & learning rate \\
        \midrule
            \multirow{2}{*}{MNIST} & \texttt{Base} & 7 & 4 & \multirow{2}{*}{1024} & 2.0\\
            & \texttt{Large} & 7 & 4 & & 0.05 \\
            \midrule
            \multirow{2}{*}{CIFAR-10} & \texttt{Base} & 9 & 6 & \multirow{2}{*}{2048} & 1.0 \\
            & \texttt{Large} & 7 & 4 & & 0.1 \\
            \midrule
            \multirow{2}{*}{Face Mask Detection} & \texttt{Base} & 22 & 19 & \multirow{2}{*}{512} & 0.5 \\
            & \texttt{Large} & 10 & 7 &  & 0.1  \\
            \midrule
            \multirow{2}{*}{DermaMNIST} & \texttt{Base} & 23 & 20 & \multirow{2}{*}{1024} & 0.3 \\
             & \texttt{Large} & 15 & 12 & & 0.03 \\
        \bottomrule
    \end{tabular}
    \caption{Batch size, learning rate, and number of epochs (early-stopped and best) for each benchmark dataset and model size.}
    \label{tab:hyperparam-large}
\end{table}

\subsection{Comparison between encrypted and unencrypted training}

\begin{table}[ht]
\centering
\begin{tabular}{l|ccc}
    \toprule
    Dataset & encrypted (s) & epochs & unencrypted (s) \\
    \midrule
    MNIST & 3442 & 14 & 194 \\
    CIFAR-10 & 3114 & 10 & 113 \\
    Face Mask Detection & 567 & 22 & 22 \\
    DermaMNIST & 1137 & 23 & 41 \\
    SNIPS & 1264 & 25 & 84 \\
    \bottomrule
\end{tabular}
\caption{Comparison of total runtime for encrypted and unencrypted training across various datasets.}
\label{tab:runtime}
\end{table}

We ran HETAL on the unencrypted datasets and compared the runtimes with those for encrypted datasets in Table \ref{tab:runtime}.
It is important to note that we implemented the fine-tuning module of HETAL for unencrypted data using NumPy \cite{harris20numpy} from scratch for a fair comparison, and the results are obtained without using a GPU.

Though the runtimes for encrypted training are longer, it is crucial to highlight that we have achieved practical performance levels with our homomorphic encryption implementation. The experimental results demonstrate that the training was completed in less than an hour for all five datasets with a dimension of 768, reinforcing the practical feasibility of HETAL.

We remark that both the unencrypted and encrypted versions could be further improved if additional optimizations were implemented.

\clearpage
\newpage
\mbox{}

\section{Updates after publication}

Here, we document revisions to our paper subsequent to its official publication at ICML 2023. We extend our gratitude to those who have provided valuable feedback. Any necessary updates to the codes and \texttt{.whl} files on GitHub (https://github.com/CryptoLabInc/HETAL) will also be made available.

\subsection{Correction of the Table \ref{tab:matnum}}

We have made revisions to Table \ref{tab:matnum}, now referenced as Table \ref{tab:matnum_fixed}. During the creation of Table \ref{tab:matnum}, we identified an oversight in the inclusion of counters within function calls for basic operations, resulting in the omission of lines tallying \textsf{CMult} and \textsf{Mult}. This discrepancy has been rectified, and the counts for \textsf{CMult} and \textsf{Mult} across all four algorithms (\textsf{ColMajor}, \textsf{RowMajor}, \textsf{DiagABT}, and \textsf{DiagATB}) have been updated accordingly, as depicted in Table \ref{tab:matnum_fixed}. The revised code has been integrated into the latest version of the \texttt{.whl} files on the GitHub repository. Notably, Table \ref{tab:matmul_comp} remains accurate.

Furthermore, in the experimentation for Table \ref{tab:matmul_comp}, both levels of matrices $A$ and $B$ were set to $12$, representing the maximum achievable level within our CKKS parameter. Consequently, the \textsf{DiagATB} algorithm did not utilize \textsf{PRotUp} for depth optimization during the comparison. As discussed in the paper, when $\mathsf{level}(A) < \mathsf{level}(B)$, a depth can be conserved by one, albeit with an increase in the number of operations, as detailed in Table \ref{tab:matnum_fixed}. Additionally, Table \ref{tab:matmul_cnt_formula} has been introduced, offering precise operation counts for a given shape and $s_0, s_1$, supplementing the computational complexity presented in Table \ref{tab:matmulcomplexity}.

We express our gratitude to Miran Kim for bringing these discrepancies to our attention.

\subsection{Proof of Theorem \ref{thm:softmaxbound}}

In the proof of Theorem \ref{thm:softmaxbound}, we implicitly assumed that $D_n(r)$ is an odd function.
This is true for the actual polynomials we used for domain extension (the same polynomials as in \cite{cheon2022efficient}).
We thank Hojune Shin for pointing this out.

\begin{table*}[ht]
\centering
\setlength\extrarowheight{2pt}
\begin{tabular}{c|ccc|ccc}
\toprule
\multirow{2}{*}{$(a, b, c)$}& \multicolumn{3}{c|}{$AB^{\T}$ ($A \in \mathbb{R}^{a\times b}, B \in \mathbb{R}^{c \times b}$)} & \multicolumn{3}{c}{$A^{\T}B$ ($A \in \mathbb{R}^{a \times c}, B \in \mathbb{R}^{a \times b}$)} \\
\cline{2-7}
 & (Jin et al.)  & \textsf{ColMajor} & \textsf{DiagABT} &  (Jin et al.)  & \textsf{RowMajor} & \textsf{DiagATB} \\
\midrule
\mrow{5}{$(128, 128, 4)$} & \mrow{2}{0} & \mrow{2}{8} & \mrow{2}{4} & \mrow{2}{0} & \mrow{2}{8} & 4 \\
\cline{7-7}
 &  &  &  &  &  & 4 \\
\cline{2-7}
 & 512 & 4 & 2 & 512 & 4 & 2 \\
\cline{2-7}
 & \mrow{2}{0} & \mrow{2}{63} & \mrow{2}{34} & \mrow{2}{7680} & \mrow{2}{63} & 18\\
\cline{7-7}
 &  &  &  &  &  & 18 \\
\midrule
\mrow{5}{$(256, 256, 8)$} & \mrow{2}{0} & \mrow{2}{24} & \mrow{2}{8} & \mrow{2}{0} & \mrow{2}{24} & 12 \\
\cline{7-7}
 &  &  &  &  &  & 15 \\
\cline{2-7}
& 2048 & 16 & 8 & 2048 &16 & 8\\
\cline{2-7}
& \mrow{2}{0} & \mrow{2}{191} & \mrow{2}{64} & \mrow{2}{30720} & \mrow{2}{191} & 72 \\
\cline{7-7}
 &  &  &  &  &  & 75 \\
\midrule
\mrow{5}{$(512, 769, 4)$} & \mrow{2}{0} & \mrow{2}{56} & \mrow{2}{4} & \mrow{2}{0} & \mrow{2}{56} & 28\\
\cline{7-7}
 &  &  &  &  &  & 40 \\
\cline{2-7}
& 3076 & 52 & 26 & 3076 & 52 & 26\\
\cline{2-7}
& \mrow{2}{0} & \mrow{2}{495} & \mrow{2}{50} & \mrow{2}{46140} & \mrow{2}{495} & 238\\
\cline{7-7}
 &  &  &  &  &  & 250 \\
\midrule
\mrow{5}{$(1024, 769, 8)$} & \mrow{2}{0} & \mrow{2}{208} & \mrow{2}{8} & \mrow{2}{0} & \mrow{2}{208} & 104 \\
\cline{7-7}
 &  &  &  &  &  & 176 \\
\cline{2-7}
& 6152 & 200 & 100 & 6152 & 200 & 100  \\
\cline{2-7}
& \mrow{2}{0} & \mrow{2}{2047} & \mrow{2}{140} & \mrow{2}{92280} & \mrow{2}{2047} & 1008 \\
\cline{7-7}
 &  &  &  &  &  & 1080 \\
\midrule
\mrow{5}{$(2048, 769, 16)$} & \mrow{2}{0} & \mrow{2}{800} & \mrow{2}{16} & \mrow{2}{0} & \mrow{2}{800} & 400 \\
\cline{7-7}
 &  &  &  &  &  & 736 \\
\cline{2-7}
& 12304 & 784 & 392 & 12304 & 784 & 392 \\
\cline{2-7}
& \mrow{2}{0} & \mrow{2}{8703} & \mrow{2}{456} & \mrow{2}{184560} & \mrow{2}{8703} & 4328 \\
\cline{7-7}
 &  &  &  &  &  & 4664 \\
\bottomrule
\end{tabular}
\caption{The number of constant multiplications (\textsf{CMult}, first rows), multiplications (\textsf{Mult}, second rows),  and rotations (\textsf{Rot}, third rows).
For \textsf{DiagATB}, the first row (resp. second row) of \textsf{CMult} and \textsf{Rot} represent the case when $\mathsf{level}(A) \geq \mathsf{level}(B)$ (resp. $\mathsf{level}(A) < \mathsf{level}(B)$).}
\label{tab:matnum_fixed}
\end{table*}

\begin{table*}[ht]
\centering
\resizebox{\columnwidth}{!}{%
\setlength\extrarowheight{2pt}
\setlength\tabcolsep{1.5pt}
\begin{tabular}{c|c|c|c|c|c|c}
\toprule
\multirow{2}{*}{Ops}& \multicolumn{3}{c|}{$AB^{\T}$ ($A \in \mathbb{R}^{a\times b}, B \in \mathbb{R}^{c \times b}$)} & \multicolumn{3}{c}{$A^{\T}B$ ($A \in \mathbb{R}^{a \times c}, B \in \mathbb{R}^{a \times b}$)} \\
\cline{2-7}
 & (Jin et al.)  & \textsf{ColMajor} & \textsf{DiagABT} &  (Jin et al.)  & \textsf{RowMajor} & \textsf{DiagATB}  \\
\midrule
\mrow{2}{$\mathsf{CMult}$} & \mrow{2}{0} & \mrow{2}{$c( \lceil\frac{a}{s_0}\rceil \lceil \frac{b}{s_1}\rceil + \lceil \frac{a}{s_0} \rceil )$} & \mrow{2}{$c$} & \mrow{2}{0} & \mrow{2}{$c \lceil\frac{a}{s_0}\rceil \lceil \frac{b}{s_1}\rceil + c$} & $1 + \frac{c}{2}(\lceil \frac{a}{s_0} \rceil + \lceil \frac{b}{s_1}\rceil ) - \lceil \frac{a}{s_0}\rceil$   \\
\cline{7-7}
 &  &  &  &  &  &  $1 + \frac{c}{2}(\lceil \frac{a}{s_0} \rceil \lceil \frac{b}{s_1}\rceil + \lceil \frac{b}{s_1}\rceil ) - \lceil \frac{a}{s_0}\rceil$ \\
\midrule
$\mathsf{Mult}$ & $bc$  & $c \lceil\frac{a}{s_0}\rceil \lceil \frac{b}{s_1}\rceil$ & $\frac{c}{2} \lceil \frac{a}{s_0} \rceil \lceil \frac{b}{s_1} \rceil $ & $bc$ & $c \lceil\frac{a}{s_0}\rceil \lceil \frac{b}{s_1}\rceil$ & $\frac{c}{2} \lceil \frac{a}{s_0} \rceil \lceil \frac{b}{s_1} \rceil$ \\
\midrule
\mrow{2}{$\mathsf{Rot}$} & \mrow{2}{0} & \mrow{2}{$c(\lceil \frac{b}{s_1} \rceil \log s_0 + \lceil \frac{a}{s_0} \rceil (\log s_1 + 1)) - \lceil \frac{a}{s_0} \rceil $} & \mrow{2}{$\frac{c}{2}(1 + 2 \lceil \frac{a}{s_0}\rceil \log s_1)$} & \mrow{2}{$bc \log s$}  & \mrow{2}{$c(\lceil \frac{b}{s_1} \rceil \log s_0 + \lceil \frac{a}{s_0} \rceil (\log s_1+ 1)) - \lceil \frac{a}{s_0} \rceil $} & $2 + \frac{c}{2} (2\lceil \frac{a}{s_0} \rceil  + \log s_0 \lceil \frac{b}{s_1} \rceil) - 2\lceil \frac{a}{s_0} \rceil$ \\
\cline{7-7}
 &  &  &  &  & &  $2 + \frac{c}{2} (\lceil \frac{a}{s_0} \rceil + \lceil \frac{a}{s_0}\rceil \lceil \frac{b}{s_1} \rceil + \log s_0 \lceil \frac{b}{s_1} \rceil) - \lceil \frac{a}{s_0} \rceil - \lceil \frac{a}{s_0} \rceil \lceil \frac{b}{s_1} \rceil$\\
\bottomrule
\end{tabular}
}
\caption{Exact number of operations in terms of $a, b, c, s_0, s_1$. Here $s = s_0 s_1$. For \textsf{DiagATB}, the first row (resp. second row) of \textsf{CMult} and \textsf{Rot} represent the case when $\mathsf{level}(A) \geq \mathsf{level}(B)$ (resp. $\mathsf{level}(A) < \mathsf{level}(B)$).}

\label{tab:matmul_cnt_formula}
\end{table*}

\end{document}